\newcommand{\defn}[1]{\emph{\textbf{{#1}}}}
\newcommand{\NN}{\mathbb{N}}
\newcommand{\poly}{\operatorname{poly}}
\newcommand{\polylog}{\operatorname{polylog}}
\newcommand{\E}{\mathbb{E}}
\newcommand{\tow}{\operatorname{tow}}
\newcommand{\si}{\mathbf{S}}
\newcommand{\ham}{\operatorname{Ham}}
\renewcommand{\paragraph}[1]{\vspace{.5 cm} \noindent \textbf{#1.} }
\newtheoremstyle{slanted}
{3pt}
{3pt}
{\slshape}
{}
{\bfseries}
{.}
{.5em}
{}
\theoremstyle{slanted}
\newtheorem{theorem}{Theorem}
\newtheorem{lemma}[theorem]{Lemma}
\newtheorem{proposition}[theorem]{Proposition}
\newtheorem{definition}[theorem]{Definition}
\begin{document}

\title{Memoryless Worker-Task Assignment with Polylogarithmic Switching Cost}

\author{Aaron Berger\footnote{Supported by NSF Graduate Research Fellowship Program DGE-1745302.} \\MIT\and William Kuszmaul\footnote{Supported by an NSF GRFP fellowship and a Fannie and John Hertz Fellowship. Research was partially sponsored by the United States Air Force Research Laboratory and was accomplished under Cooperative Agreement Number FA8750-19-2-1000. The views and conclusions contained in this document are those of the authors and should not be interpreted as representing the official policies, either expressed or implied, of the United States Air Force or the U.S. Government. The U.S. Government is authorized to reproduce and distribute reprints for Government purposes notwithstanding any copyright notation herein.}\\ MIT\and Adam Polak\footnote{supported by the Swiss National Science Foundation within the project \emph{Lattice Algorithms and Integer Programming} (185030). Part of this work was done at Jagiellonian University, supported by the National Science Center of Poland grant 2017/27/N/ST6/01334.}\\ EPFL\and Jonathan Tidor\footnote{Supported by NSF Graduate Research Fellowship Program DGE-1745302.}\\ MIT\and Nicole Wein\footnote{This work was done at MIT, supported by NSF Grant CCF-1514339.}\\ DIMACS}

\date{}

\maketitle

\begin{abstract}
  We study the basic problem of assigning memoryless workers to tasks with dynamically changing demands. Given a set of $w$ workers and a multiset $T \subseteq[t]$ of $|T|=w$ tasks, a memoryless worker-task assignment function is any function $\phi$ that assigns the workers $[w]$ to the tasks $T$ based only on the current value of $T$. The assignment function $\phi$ is said to have switching cost at most $k$ if, for
  every task multiset $T$, changing the contents of $T$ by one task
  changes $\phi(T)$ by at most $k$ worker assignments. The goal of memoryless worker task assignment is to construct an assignment function with the smallest possible switching cost.
  
  In past work, the problem of determining the optimal switching cost has been posed as an open question. There are no known sub-linear upper bounds, and after considerable effort, the best known lower bound remains 4 (ICALP 2020). 
  
  We show that it is possible to achieve polylogarithmic switching cost. We give a construction via the probabilistic method that achieves
  switching cost $O(\log w \log (wt))$ and an explicit construction that
  achieves switching cost $\polylog (wt)$. We also prove a super-constant lower bound on switching cost: we show that for any value of $w$, there exists
  a value of $t$ for which the optimal switching cost is $w$. Thus it is not possible to achieve a switching cost that is sublinear \emph{strictly} as a function of $w$. 

  Finally, we present an application of the worker-task assignment problem to a metric embeddings problem. In particular, we use our results to give the first low-distortion embedding from sparse binary vectors into low-dimensional Hamming space.
\end{abstract}

\vfill
\pagebreak

\section{Introduction}

The general problem of \defn{distributed task allocation}, where a group of agents must collectively allocate themselves to tasks, has been studied in a wide variety of settings where the agents have varying degrees of communication, memory, knowledge of the system, and faultiness (see Georgiou and Shvartsman's book~\cite{georgiou2011cooperative} for a survey). In this paper we are interested in the dynamic version of a distributed task allocation problem, that is, where the demands for each task are changing over time. Dynamic task allocation has been the focus of a great deal of both empirical and theoretical work in areas such as swarm robotics \cite{krieger2000ant, csahin2004swarm, mclurkin2005dynamic, mclurkin2004stupid, lerman2006analysis, macarthur2011distributed, su2020lower} and collective insect behavior \cite{beshers2001models, robinson1992regulation, radeva2017costs, su2017ant, su2020lower}.

Although there are many possible variations of the dynamic task allocation problem (in particular, in terms of what the capabilities of the workers are), most share a basic common structure. For positive integers $w$ and $t$, there are $w$ workers $1, 2, \ldots, w$, and there is a multiset $T \subseteq [t]$ of $|T|=w$ tasks.\footnote{If a task $j$ appears $m_T(j)$ times in the multiset $T$, then one should think of the task as having a current demand of $m_T(j)$ workers. By writing $T \subseteq [t]$ we mean that the elements of multiset $T$ are from $\{1, 2, \ldots, t\}$, but the multiplicity of each element can be arbitrarily large.} The multiset $T$ changes gradually over time: in each time step, one new task is added and one old task is removed. The goal is to maintain a dynamic assignment of workers $\{1, 2, \ldots, w\}$ to tasks $T$ such that the \defn{switching cost} is as small as possible, i.e., the number of worker-task assignments that change each time that $T$ changes is bounded by some small quantity.

The goal of studying dynamic distributed task allocation is to answer the following question: to what degrees are various capabilities (i.e., memory, communication, knowledge of the system, computational power, etc.) needed for the workers to guarantee a small switching cost? 

Motivated in part by applications to swarm robotics and collective ant behavior, recent work \cite{su2017ant, su2020lower} has focused on the question of what happens when workers are \emph{completely memoryless}. At any given moment each worker $i \in \{1, 2, \ldots, w\}$ must determine which task $\tau \in T$ they are assigned to solely as a function of the current task multiset $ T$; the workers do not remember anything about the past system states or worker assignments.  Note that this memoryless-ness requirement is sufficiently strong that it also implies communicationless-ness---indeed, if workers cannot remember where they were in previous steps, and all that each worker knows is the current multiset of tasks, then there is no worker-specific information to be shared. Being memoryless and communicationless is especially important to settings where a worker might suffer a fault and thus ``reboot''; in this case, the worker can determine which task they are assigned to without relying on what was stored in its memory and without knowledge of which tasks other workers are assigned to at the moment \cite{su2020lower}.

It remains an open question \cite{su2020lower, su2017ant} whether memoryless workers can achieve even a \emph{sub-linear} switching cost. In this paper we show that, not only is sub-linear switching cost possible, but it is even possible to achieve a polylogarithmic switching cost of $\polylog (wt)$. We also prove a lower bound for any algorithm that wishes to parameterize by only $w$ and not $t$: for any $w$, if $t \gg w$ is sufficiently large, then the trivial switching cost of $w$ becomes optimal.

\paragraph{Formal problem statement}
Throughout this paper, we shall refer to the memoryless dynamic distributed task allocation problem simply as the \defn{worker-task assignment problem}. Formally, the worker-task assignment problem is defined as follows. There are $w$ workers $1, 2, \ldots, w$ and $t$ tasks $1, 2, \ldots, t$. A \defn{worker-task assignment function} $\phi$ is a function that takes as input a multiset $T$ of $w$ tasks, and produces an assignment of workers to tasks such that
the number of workers assigned to a given task $\tau \in T$ is equal to the multiplicity of $\tau$ in $T$.

Two task multisets $T_1, T_2$ of size $w$ are said to be \emph{adjacent} if they agree on exactly $w-1$ elements; that is, $|T_1\setminus T_2|=|T_2\setminus T_1|=1$.\footnote{Let $m_A(i)$ denote the number of times element $i$ appears in multiset $A$. Then, for any two multisets $A$~and $B$, we define multisets $A \setminus B$, $A \cup B$, and $A \cap B$ to be such that $m_{A\setminus B}(i) = \max(0, m_A(i)-m_B(i))$, $m_{A \cup B}(i) = \max(m_A(i), m_B(i))$, and $m_{A \cap B}(i) = \min(m_A(i), m_B(i))$, for every element $i$.} The switching cost between two adjacent task multisets $T_1, T_2$ of size $w$ is defined as the number of workers whose assignment changes between $\phi(T_1)$ and $\phi(T_2)$. The \defn{switching cost of $\phi$} is defined to be the maximum switching cost over \emph{all} pairs of adjacent task multisets. The goal of the worker-task assignment problem is to design a worker-task assignment function with the minimum possible switching cost.

\paragraph{Prior work in the memoryless setting}
The optimal switching cost is trivially between $1$ and $w$. Improving either of these bounds substantially has proven difficult, however. 
 
Su, Su, Dornhaus, and Lynch~\cite{su2017ant} initiated the study of the worker-task assignment problem and observed that assigning the workers to tasks in numerical order achieves a switching cost of $\min(t-1, w)$. They also proved a lower bound of $2$ on switching cost, and showed a matching upper bound in the case where $w \le 6$ and $t \le 4$. 

Subsequent work by Su and Wein~\cite{su2020lower}, in ICALP 2020, pushed further on the lower-bound side of the problem. They proved that a switching cost of 2 is not always possible in general. They show that, if $t\geq 5$ and $w\geq 3$, then any worker-task assignment function must have switching cost at least $3$; and if $t$ is sufficiently large in terms of $w$ (i.e., it is a tower of height $w-1$), then the switching cost must be at least $4$. 

The bounds by \cite{su2017ant} and \cite{su2020lower} have until now remained state-of-the-art. It remains unknown whether the optimal switching cost is small (it could be as small as $4$) or large (it could be as large as $\min(t-1, w)$). And even achieving a lower bound of $4$ on switching cost \cite{su2020lower} has required a quite involved argument.

\subsection{This paper}
The contributions of this paper are twofold. First, we present significant progress on the worker-task assignment problem, resulting in both a polylogarithmic upper bound and a super-constant lower bound for the optimal switching cost---our results are interesting in part because of their use of randomized techniques to construct a deterministic assignment function. Second, we explore a natural connection between the worker-task assignment problem and the metric embedding problem of densification into Hamming space, and we transform our progress on the former into new results for the latter.

\paragraph{Results on worker-task assignment}
Our first result establishes that it is possible to construct a worker-task assignment function with $O(\log w \log (wt))$ switching cost. This resolves the open question as to whether memoryless worker can allocate themselves to tasks with strong worst-case guarantees.
\begin{restatable}{theorem}{thmUB}
  There exists a worker-task assignment function that achieves switching cost $O(\log w \log (wt))$.
  \label{thm:UB}
\end{restatable}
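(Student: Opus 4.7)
The plan is to construct $\phi$ via the probabilistic method using shared random auxiliary information that couples the assignments across different multisets $T$, and then to show by a union bound over all adjacent task-multiset pairs that some realization of the randomness yields switching cost $O(\log w\log(wt))$ for every pair.

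A natural candidate random structure is to associate, with every triple $(i,\tau,c)\in [w]\times [t]\times [w]$ (worker, task type, copy number), an independent uniform random weight $W_{i,\tau,c}\in[0,1]$. Given a multiset $T$, let $I_T=\{(\tau,c):1\le c\le m_T(\tau)\}$ denote the multiset of task instances, and set $\phi(T)$ equal to the unique minimum-weight perfect matching between $[w]$ and $I_T$ (unique with probability one because ties have measure zero). For adjacent $T_1,T_2$ (differing by removing a single instance $x$ and adding a single instance $y$), a matroid-exchange argument shows that $\phi(T_1)\,\triangle\,\phi(T_2)$ is a single alternating path---an augmenting path from $y$ to $x$'s former partner---and the switching cost is at most half its length. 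I would then bound the length of the minimum-weight augmenting path by a Dijkstra-style exploration from $y$ along light alternating edges, showing that the reachable frontier multiplies by roughly $w$ every $O(\log(wt))$ steps with overwhelming probability, so the path reaches $x$'s former partner within depth $O(\log w\cdot\log(wt))$.

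The main obstacle is making the tail bound tight enough to survive the union bound over the $wt\binom{t+w-1}{w}=\exp(\Theta(w\log(wt)))$ adjacent pairs: the per-pair failure probability must be $\exp(-\Omega(w\log(wt)))$. Generic random-matching arguments only produce tails of the form $\exp(-\Omega(L))$ for augmenting-path length $L$, which would force $L=\Omega(w\log(wt))$ and so would not suffice. I anticipate having to refine the construction into a hierarchical randomized scheme---for instance, $O(\log w)$ recursive levels of partitioning $(W,T)$ into two balanced halves via independent hash functions with rebalancing dictated by random priorities, so that each level contributes $O(\log(wt))$ to the switching cost via Chernoff-type concentration strong enough for a level-specific union bound---together with a careful analysis of how rebalancing effects cascade between levels. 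Setting up the per-level construction and bounding the cross-level cascades without losing more than a logarithmic factor is the central technical challenge; once that is done, summing the $O(\log(wt))$ per-level cost over the $O(\log w)$ levels yields the advertised bound.
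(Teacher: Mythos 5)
You have correctly diagnosed the central obstacle---the union bound over all $\exp(\Theta(w\log(wt)))$ adjacent pairs cannot be paid for by per-pair tail bounds on a random matching---and your proposed shape of the fix ($O(\log w)$ levels, each contributing $O(\log(wt))$ to the switching cost) matches the true structure of the argument. But the proposal stops exactly where the proof has to begin: the ``hierarchical randomized scheme'' is left as a sketch whose central technical challenge you explicitly defer, and the sketch as stated still frames each level's switching-cost contribution as a probabilistic event controlled ``via Chernoff-type concentration.'' That framing inherits the same union-bound problem you identified: a probabilistic bound on switching cost must be union-bounded over \emph{pairs} of adjacent inputs, and at the top level there are still exponentially many such pairs.

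The two missing ideas are these. First, the randomness must be decoupled from the switching cost entirely: each level should consist of $\Theta(\log(wt))$ rounds of a balls-to-bins step (hash the unassigned workers and the unassigned tasks into $w/(1.1)^i$ bins with a \emph{fixed} pair of hash functions used for every input, and match the smallest worker to the smallest task in each bin). Such a step \emph{deterministically} changes its output by $O(1)$ when its input changes by one element, and deterministically maps unit-distance inputs to unit-distance outputs, so the $O(\log w\log(wt))$ switching-cost bound holds for every realization of the randomness and needs no union bound at all. The only failure mode is that some workers remain unassigned. Second, the union bound for that failure mode is taken over \emph{single} inputs $(W,T)$ to each level, not over adjacent pairs, and it closes because of a cancellation: at level $i$ the surviving worker/task sets have size $O(w/(1.1)^i)$, so there are only $(wt)^{O(w/(1.1)^i)}$ inputs to control, while repeating the balls-to-bins step $\Theta(\log(wt))$ times drives the per-input failure probability down to $(wt)^{-\Omega(c\,w/(1.1)^i)}$; choosing $c$ large makes the product vanish at every level simultaneously. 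Without making the switching cost deterministic and without this shrinking-input-space observation, the ``level-specific union bound'' you invoke does not go through, and the rebalancing cascades you anticipate analyzing are an artifact of a construction that the correct argument avoids entirely.
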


Theorem~\ref{thm:UB} is proven via the probabilistic method and is thus non-constructive. By replacing random hash functions with strong dispersers, however, we show that one can construct an explicit worker-task assignment function with polylogarithmic switching cost.

\begin{restatable}{theorem}{thmUBcon}
  There is an explicit worker-task assignment function that achieves switching cost $O(\polylog (wt))$.
  \label{thm:UBcon}
  \end{restatable}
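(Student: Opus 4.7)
The plan is to derandomize the probabilistic construction from Theorem~\ref{thm:UB}. That theorem picks random hash functions and shows, via a union bound over pairs of adjacent multisets, that with positive probability the resulting assignment function $\phi$ has switching cost $O(\log w \log(wt))$. To make this explicit, I would identify the precise pseudorandom property the hash functions must satisfy, then substitute them with explicit objects exhibiting that property.

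First I would revisit the failure analysis in Theorem~\ref{thm:UB} and isolate what it actually needs from the hashes. In such arguments, the hashes are asked to spread some ``relevant'' collection of (task, index) inputs across a target range so that various bad events each occur with probability at most $1/\poly(wt)$. Crucially, this is a \emph{dispersion} statement---it asks that the hash image cover a large fraction of the range, not that preimages have prescribed sizes---and it is exactly the guarantee provided by a strong disperser $D : [N] \times [S] \to [M]$: for every $A \subseteq [N]$ with $|A| \geq 2^{k}$, and for all but an $\varepsilon$ fraction of seeds $s \in [S]$, the set $\{D(a,s) : a \in A\}$ covers $(1-\varepsilon)M$ elements of $[M]$. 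Known explicit constructions (e.g.\ Zuckerman's, or those in the Guruswami--Umans--Vadhan framework) achieve seed length $O(\log(Nwt))$ and error $\varepsilon = 1/\poly(wt)$ with entropy threshold $k = \polylog(wt)$, which are exactly the parameters we need.

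Second, I would substitute the disperser in place of each random hash, letting the deterministic seed be fixed in advance. The existence of a simultaneously good collection of seeds is already guaranteed by the union bound used for Theorem~\ref{thm:UB}; coupled with an explicit disperser, a brute-force search (or an $\varepsilon$-biased sample space) selects such seeds deterministically in time $\poly(wt)$. Rerunning the switching-cost analysis with the deterministic dispersion guarantee in place of the probabilistic one then yields the same bound up to a polylogarithmic overhead coming from the seed length, giving switching cost $\polylog(wt)$ as claimed.

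The main obstacle I anticipate is matching the interface precisely. If the analysis in Theorem~\ref{thm:UB} needs strictly more than plain dispersion---for instance, approximate independence across multiple hash evaluations, or joint dispersion of several hashes onto a common range---then a single disperser will not suffice. In such a case I would compose the disperser with a $\polylog(wt)$-wise independent generator or with expander-walk sampling; each of these fits within a polylogarithmic seed length and so leaves the final switching-cost bound intact. Identifying the right combination of pseudorandom primitives and verifying that explicit constructions meet the required parameters is where I expect the bulk of the technical work to lie; the switching-cost calculation itself should be a mechanical re-run of the proof of Theorem~\ref{thm:UB}.
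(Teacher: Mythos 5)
You correctly guess the right pseudorandom primitive (strong dispersers) and correctly observe that the switching-cost analysis is deterministic and survives any choice of hash functions. But your derandomization strategy---find a single good seed via brute-force search or a small sample space, justified by the union bound from Theorem~\ref{thm:UB}---has a genuine gap. The union bound in that proof is over all worker/task inputs $(W,T)$ with $|W|=|T|\le 1.1k$, of which there are $n^{O(k)}=(wt)^{O(k)}$ with $k$ as large as $w$; correspondingly, the per-input failure probability required of each round is $t^{-\Omega(r_i)}$, not $1/\poly(wt)$. A disperser (or $\varepsilon$-biased space, or expander walk) with polylogarithmic seed length cannot deliver failure probability $t^{-\Omega(r_i)}$ per input for large $r_i$, and even if the existential union bound goes through, certifying that a candidate seed is simultaneously good for all $(wt)^{O(w)}$ inputs is not a $\poly(wt)$-time computation, so the search paradigm does not produce an explicit function.

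The paper's construction avoids this entirely by never selecting a seed. Each sub-stage $\widehat{\mathcal{E}_i}$ \emph{iterates over all} $D_i=\polylog(wt)$ seeds of the Meka--Reingold--Zhou disperser, running one balls-in-bins round per seed. The disperser property is then used as a worst-case covering statement: for any fixed input $(W,T)$ with $|W|\ge 2^{k_i}$, the images of $W$ and of $T$ over all seed/bin pairs each cover a $3/4$ fraction of $[M_i]\times[D_i]$, so they intersect in half of it, and by pigeonhole some single seed $j$ has $|Disp_i(W,j)\cap Disp_i(T,j)|\ge M_i/2$; every such bin either makes an assignment in round $j$ or witnesses an assignment made in an earlier round. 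This yields, deterministically and for every input, at least $M_i/4 = 2^{k_i}\cdot\Omega(1/\log^3 N)$ assignments per pass, so $O(\log^3 N)$ passes halve the input---no probability, no union bound. The price is an extra $\polylog$ factor in the number of rounds (hence in the switching cost, which degrades from $O(\log w\log(wt))$ to $\polylog(wt)$), because each pass only guarantees a $1/\polylog$ fraction of assignments rather than a constant fraction. To repair your argument you would need to replace ``fix one good seed globally'' with ``enumerate all seeds inside the algorithm and argue progress by pigeonhole on each input separately.''
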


Both Theorems \ref{thm:UB} and \ref{thm:UBcon} continue to hold in the more general setting where the size of $T$ changes over time. That is, $T$ is permitted to be any multiset of $[t]$ of size $w$ or smaller. Two task multisets $T_1, T_2$ of different sizes are considered adjacent if they satisfy $\big| |T_1| - |T_2| \big| = 1$ and $|(T_1 \cup T_2) \setminus (T_1 \cap T_2)| = 1$. If $|T| < w$, then our worker-task assignment function assign workers $1, \ldots, |T|$ to tasks, and leaves workers $|T| + 1, \ldots, w$ unassigned. 

Finally, from the lower bounds side, we prove that no algorithm can achieve sub-linear switching cost as a function of only $w$. Theorem~\ref{thm:LB} says that, if $t$ is sufficiently larger than $w$, then for any worker-task assignment function, there must exist a pair of adjacent task multi-sets that forces \emph{all} of the $w$ workers to be reassigned. In the statement of the theorem, and throughout the paper, $\tow(n)$ is defined to be a tower of twos of height $n$ (i.e., the inverse of the $\log^*$ function).

\begin{restatable}{theorem}{thmLB}
\label{thm:LB}
  For every $w$ and $t\geq\tow(\Omega(w))$, every worker-task assignment function has switching cost $w$.
\end{restatable}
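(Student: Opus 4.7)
My plan is to derive the lower bound by a single application of the hypergraph Ramsey theorem, with no need for induction on $w$. The idea is to set up a Ramsey coloring whose monochromatic ``blocks'' force $\phi$ to behave like a rigid permutation on distinct-task multisets, and then exhibit a single swap transition that moves \emph{every} worker at once.

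To each $w$-element subset $\{\tau_1 < \tau_2 < \cdots < \tau_w\} \subseteq [t]$ I would associate the permutation $\pi \in S_w$ characterizing $\phi$ on the multiset $T = \{\tau_1, \ldots, \tau_w\}$ of $w$ distinct tasks: $\pi(i)$ is defined as the rank (within the $w$-subset) of the task to which worker $i$ is assigned. This gives a $w!$-coloring of the $w$-uniform hypergraph on $[t]$. I would then apply the hypergraph Ramsey theorem to extract a monochromatic $(w+1)$-subset. The relevant Ramsey number $R_w(w+1;\, w!)$ is bounded above by a tower of height $w-1$ with input polynomial in $w$, hence at most $\tow(O(w))$, which is consistent with the theorem's hypothesis $t \geq \tow(\Omega(w))$. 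I thereby obtain tasks $\rho_1 < \rho_2 < \cdots < \rho_{w+1}$ in $[t]$ such that every one of the $w+1$ subsets of size $w$ in $\{\rho_1,\ldots,\rho_{w+1}\}$ is assigned by $\phi$ according to the same permutation $\pi^\ast$.

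The final step is to contradict switching cost at most $w-1$ by looking at the adjacent pair $T = \{\rho_1, \ldots, \rho_w\}$ and $T' = \{\rho_2, \ldots, \rho_{w+1}\}$, which differ only by replacing $\rho_1$ with $\rho_{w+1}$. By the monochromaticity, both $\phi(T)$ and $\phi(T')$ realize the same permutation $\pi^\ast$ on the sorted underlying task sets: in $\phi(T)$, worker $i$ is assigned to the $\pi^\ast(i)$-th smallest task of $T$, namely $\rho_{\pi^\ast(i)}$, while in $\phi(T')$, worker $i$ is assigned to the $\pi^\ast(i)$-th smallest task of $T'$, namely $\rho_{\pi^\ast(i)+1}$. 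Since $\rho_{\pi^\ast(i)} \neq \rho_{\pi^\ast(i)+1}$ for every $i \in [w]$, every worker changes its assignment, giving switching cost exactly $w$ on this pair.

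I do not anticipate any serious obstacle. The only quantitative ingredient to verify is the Erd\H{o}s--Rado-style bound $R_w(w+1;\, w!) \leq \tow(O(w))$; once the monochromatic $(w+1)$-subset is in hand, the ``shift'' transition $T \to T'$ collapses the analysis to a one-line observation that every worker's rank-pointer lands on a different underlying task.
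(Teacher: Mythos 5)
Your proposal is correct and follows essentially the same approach as the paper: the same $w!$-coloring of the complete $w$-uniform hypergraph on $[t]$ by the permutation encoding $\phi$'s assignment, the same application of the Erd\H{o}s--Rado hypergraph Ramsey theorem to extract a monochromatic $(w+1)$-subset, and the same ``shift'' pair $\{\rho_1,\ldots,\rho_w\}$ versus $\{\rho_2,\ldots,\rho_{w+1}\}$ to force all $w$ workers to move.
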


This represents the first super-constant lower bound for the switching cost of a worker-task assignment function. Another way to think about the theorem is that for every $t$, there is some $w$ for which any worker-task assignment function has switching cost at least $\Omega(\log^*(t))$. Therefore our bounds leave a gap between $\log^*$ and $\polylog$ in terms of dependence on $t$.

\paragraph{An application to metric embeddings: Densification into Hamming space}
The problem of embedding one metric space $\mathcal{M}_1$ into another
metric space $\mathcal{M}_2$ with small distortion has been widely
studied in many contexts and has found many algorithmic applications
\cite{ostrovsky2007low, chakraborty2016streaming,
  charikar2006embedding,charikar2018estimating,fakcharoenphol2004tight,
  bansal2011polylogarithmic, brinkman2005impossibility,
  charikar2002dimension,
  andoni2011near,bourgain1985lipschitz,johnson1984extensions,matouvsek2008variants,
  berinde2008combining, linial1995geometry}.

Bourgain \cite{bourgain1985lipschitz} initiated the study of metric
embeddings (into normed spaces) by showing that
$O(\log |M|)$-distortion embeddings into $\ell_2$ are possible for any
space $M$. Much of the subsequent work has focused either on
embeddings between exponentially large metric spaces \cite{ostrovsky2007low,
  chakraborty2016streaming,
  charikar2006embedding,charikar2018estimating,fakcharoenphol2004tight,
  bansal2011polylogarithmic, charikar2002dimension,
  berinde2008combining}, or on embeddings with sub-logarithmic
distortion \cite{johnson1984extensions,matouvsek2008variants,
  berinde2008combining, charikar2002dimension}.

One natural question is that of \defn{densification}: can one embed
sparse vectors from a high-dimensional $\ell_1$-space into
a low-dimensional $\ell_1$-space? That is, if $V^k_n$
is the set of $n$-dimensional vectors with $k$ non-zero entries, what
is the smallest $m$ for which $V^k_n$ can be embedded into $m$-dimensional $\ell_1$-space with low distortion?  Charikar and Sahai \cite{charikar2002dimension}
were the first to consider this problem, and showed how to achieve an
output dimension of $m = O((k / \epsilon)^2 \log n)$ with distortion
$1 + \epsilon$. They also showed how to apply densification to the
related problem of embedding arbitrary tree metrics into
low-dimensional $\ell_1$-space
\cite{charikar2002dimension}. Subsequently, Berinde et
al. \cite{berinde2008combining} used expander graphs in order to
achieve $m = O(k \log (n / k) / \epsilon^2)$ with distortion
$1 + \epsilon$. They then used their densification embedding as a tool
to perform sparse signal recovery \cite{berinde2008combining,
  gilbert2010sparse, gilbert2006algorithmic,indyk2008explicit}. Both
of the known densification algorithms \cite{charikar2002dimension,
  berinde2008combining} rely on linear sketches, in which each vector
$\vec{x} \in V^k_n$ is mapped to a vector of the form
$\sum_{i} x_i \vec{b_i}$ for some set of vectors
$\vec{b_1}, \ldots, \vec{b_n}$.

The prior work on densification \cite{charikar2002dimension,
  berinde2008combining} has focused on embedding into $\ell_1$-space.
  In
Section \ref{sec:metric}, we consider the same problem over Hamming space, where the
distance between two vectors $\vec{x}, \vec{y}$ is given by
$\ham(\vec{x}, \vec{y}) = |\{i \mid x_i \neq y_i\}|$. Densification over Hamming space requires
new techniques due to the fact that summations of vectors (and thus
linear sketches) do not behave well in Hamming space.

Let $\mathcal{H}_n^k$ denote the set of $n$-dimensional binary vectors with $k$ ones. Let $\mathcal{H}_k(n)$ denote the set of $k$-dimensional vectors with
entries from $[n]$. We show that $\mathcal{H}_n^k$ can be embedded
into $\mathcal{H}_k(n)$ with distortion $O(\log n \log k)$.
\begin{restatable}{theorem}{thmembedding}
  There exists a map $\phi : \mathcal{H}_n^k \rightarrow \mathcal{H}_k(n)$ such that, for every $\vec{x}, \vec{y} \in \mathcal{H}_n^k$,
  \[\ham(\vec{x}, \vec{y}) / 2 \le \ham(\phi(\vec{x}), \phi(\vec{y})) \le O(\log n \log k) \ham(\vec{x}, \vec{y}).\]
  \label{thm:embedding}
\end{restatable}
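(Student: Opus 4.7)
The plan is to derive Theorem~\ref{thm:embedding} as a direct consequence of Theorem~\ref{thm:UB} by interpreting each $\vec{x} \in \mathcal{H}_n^k$ as the task set $T_x \subseteq [n]$ consisting of the $k$ coordinates on which $\vec{x}$ equals $1$, with $w := k$ workers and $t := n$ tasks. Let $\psi$ be a worker-task assignment function of switching cost $O(\log w \log(wt)) = O(\log n \log k)$ guaranteed by Theorem~\ref{thm:UB} (note $k \le n$, so $\log(kn) = \Theta(\log n)$). Define the embedding $\phi : \mathcal{H}_n^k \to \mathcal{H}_k(n)$ by letting the $i$-th coordinate of $\phi(\vec{x})$ be the element of $[n]$ that $\psi(T_x)$ assigns to worker $i$. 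This is well-defined since $\psi(T_x)$ assigns each of the $k$ workers to some element of $T_x \subseteq [n]$.

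For the upper bound, fix $\vec{x}, \vec{y} \in \mathcal{H}_n^k$ and let $d := |T_x \setminus T_y| = |T_y \setminus T_x|$, so that $\ham(\vec{x}, \vec{y}) = 2d$. Interpolate between $T_x$ and $T_y$ by a sequence $T_x = T_0, T_1, \ldots, T_d = T_y$ of size-$k$ sets, where $T_{i+1}$ is obtained from $T_i$ by replacing one element of $T_x \setminus T_y$ with one element of $T_y \setminus T_x$. Each consecutive pair $(T_i, T_{i+1})$ is adjacent, so $\ham(\psi(T_i), \psi(T_{i+1})) \le O(\log n \log k)$, and the triangle inequality for Hamming distance gives
\[\ham(\phi(\vec{x}), \phi(\vec{y})) \le d \cdot O(\log n \log k) = O(\log n \log k) \cdot \ham(\vec{x}, \vec{y}).\]

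For the lower bound, observe that each element of $T_x \setminus T_y$ has multiplicity exactly $1$ in $T_x$ (since $\vec{x}$ is binary), so $\psi(T_x)$ assigns exactly one worker to it; that worker cannot be assigned to the same task by $\psi(T_y)$ because the task is absent from $T_y$. The $d$ workers obtained this way are distinct, so at least $d$ coordinates of $\phi(\vec{x})$ and $\phi(\vec{y})$ disagree, giving $\ham(\phi(\vec{x}), \phi(\vec{y})) \ge d = \ham(\vec{x}, \vec{y})/2$.

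There is essentially no hard step once the correspondence is set up: the upper bound is a triangle-inequality telescoping over single-element swaps, and the lower bound is a counting argument about workers ``pinned'' to removed tasks. The main conceptual content, identifying $\mathcal{H}_n^k \to \mathcal{H}_k(n)$ with a worker-task assignment, is what makes the polylogarithmic distortion from Theorem~\ref{thm:UB} transfer directly to a densification bound in Hamming space.
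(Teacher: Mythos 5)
Your proposal is correct and follows essentially the same route as the paper: embed via a switching-cost-$O(\log n\log k)$ assignment function from Theorem~\ref{thm:UB}, telescope over single-element swaps for the upper bound, and count workers pinned to tasks in $T_x\setminus T_y$ for the lower bound. The paper states both steps more tersely (the telescoping and the pinned-worker count are left implicit), but the underlying argument is identical.
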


The densification embedding is a simple application of the
worker-task assignment problem. In order to embed a vector
$\vec{x} \in \mathcal{H}_n^k$ into $\mathcal{H}_k(n)$, we simply
assign the workers $\{1, 2, \ldots, k\}$ to the task set
$T = \{i \mid \vec{x}_i = 1\}$, and then we construct the vector
$\vec{y}$ whose $i$-th coordinate denotes the task in $T$ to which
worker $i$ is assigned. This map transforms the switching cost in the
worker-task assignment problem into the distortion of the metric
embedding. 

The densification embedding is optimal in two senses. First, the
target space of the embedding must have $\Omega(k)$ coordinates simply
in order to allow for distances of $\Omega(k)$. Second, when
$k \ll n$, any embedding of $\mathcal{H}_n^k$ to $k$-dimensional
Hamming space must use $\Omega(\log n)$ bits per coordinate, simply in
order so that the embedding is an injection. It is not clear whether
the distortion achieved by our embedding is optimal, however, and it
remains open whether smaller distortion can be achieved by allowing
for a larger target-space dimension.

We remark that the basic relationship between worker-task assignment and densification embeddings problem has already implicitly been observed in previous work  on lower bounds, as a way to formalize what makes the worker-task assignment problem difficult \cite{su2020lower}. In contrast, here we are using the relationship as an avenue to obtain improved upper bounds for the densification problem. 

\section{Technical overview}

This section gives an overview of the main technical ideas in the
paper. For simplicity, the section will treat the task multiset
$T \subseteq [t]$ as always being a set (rather than a
multiset). As discussed in Section \ref{sec:ub}, one can formally reduce from the multiset case to the set case, at the cost of $t$ being replaced with $t' = wt$.

\subsection{A warmup: The random-permutation algorithm}
We begin by describing a simple assignment function that we call the
\defn{random-permutation algorithm}. The random-permutation algorithm
does not necessarily achieve small switching cost, but it does have the property
that for any two adjacent task sets $T_1, T_2 \subseteq [t]$, the
switching cost between $T_1$ and $T_2$ is $O(\log w)$ with high
probability in $w$. 

\paragraph{The algorithm} The random-permutation algorithm assigns to
each worker $i \in [w]$ a random permutation
\[\sigma_i = \langle \sigma_i(1), \sigma_i(2), \ldots, \sigma_i(t)
\rangle\] of the numbers $[t]$. We think of worker $i$ as
\defn{preferring} task $\sigma_i(j)$ over task $\sigma_i(j + 1)$ for
all $j \in [t - 1]$.

Suppose we wish to assign workers to tasks $T$.  The
random-permutation algorithm assigns the workers $1, 2, \ldots, w$ to
tasks $\tau_1, \tau_2, \ldots, \tau_w \in T$ one by one in order of worker ID, assigning worker $i$ to the task that it most prefers out of the tasks in $T$ that have not yet been assigned a worker.

For each $i \in [w]$, we define the \defn{$i$-remainder tasks} to be
the tasks $T \setminus \{\tau_1, \ldots, \tau_i\}$. That is, the
$i$-remainder tasks are the tasks that remain after the first $i$
workers are assigned. This means that worker $i + 1$ is assigned to
the $i$-remainder task that it most prefers.

\paragraph{Analyzing expected switching cost}
Let $T_1, T_2 \subseteq [t]$ be adjacent task sets of size $w$. We
begin by showing that the expected switching cost from $T_1$ to $T_2$
is $O(\log w)$.

Let $r$ and $s$ be such that $T_1 = (T_2 \cup \{r\}) \setminus \{s\}$. Let
$\psi_1$ and $\psi_2$ denote the assignments produced by the random
permutation algorithm for $T_1$ and $T_2$, respectively. Let $A_i$ and
$B_i$ denote the set of $i$-remainder tasks during the constructions
of $\psi_1$ and $\psi_2$, respectively.

The key to analyzing the random-permutation algorithm is to compare
the $i$-remainder sets $A_i$ and $B_i$ for each $i \in [w]$. We claim
that $|A_i \setminus B_i| \le 1$ for all $i \in [w]$. We prove this by
induction on $i$: suppose that
$A_{i - 1} = (B_{i - 1} \cup \{a\}) \setminus \{b\}$, and suppose for
contradiction that $|A_i \setminus B_i| \ge 2$. If either $\psi_1$
assigns worker $i$ to task $a$, or $\psi_2$ assigns worker $i$ to task
$b$, then we would be guaranteed that $|A_i \setminus B_i| \le 1$, a
contradiction. Thus $\psi_1$ and $\psi_2$ must each assign worker $i$
to a task in $A_{i - 1} \cap B_{i - 1}$. But this means that, in both
assignments, worker $i$ is assigned to the task in
$A_{i - 1} \cap B_{i - 1}$ that worker $i$ most prefers. Thus $\psi_1$
and $\psi_2$ assign worker $i$ to the same task, again contradicting
that $|A_{i} \setminus B_i| \ge 2$.

We now analyze the probability of $\psi_1$ and $\psi_2$ differing in
their assignment of worker $i$. Since $A_i$ contains at most one
element $a$ not in $B_i$, the probability that worker $i$ prefers $a$
over all elements in $B_i$ is at most $1 / |B_i| = 1 / (w - i +
1)$. Similarly, since $B_i$ contains at most one element $b$ not in
$A_i$, the probability that worker $i$ prefers $b$ over all elements
in $A_i$ is at most $1 / |A_i| = 1 / (w - i + 1)$. By the union bound,
it follows that the probability of $\psi_1$ and $\psi_2$ assigning
worker $i$ to different tasks is at most $2 / (w - i + 1)$.

By linearity of expectation, the expected switching cost between $T_1$
and $T_2$ is at most
\[\sum_{i = 1}^w \frac{2}{w - i + 1} = O(\log w).\]

\paragraph{Why a union bound fails for worst-case switching cost}
By using Chernoff-style bounds, one can modify the above analysis of
the random-permutation algorithm to show that, with high probability
in $w$ (i.e., probability $1 - 1/\poly w$), the switching cost between
$T_1$ and $T_2$ is $O(\log w)$. 

On the other hand, achieving a switching cost of $O(\log w)$ for \emph{all} pairs $(T_1, T_2)$ of adjacent task sets presents a challenge because
there are $\binom{w + 1}{2} \binom{t}{w + 1}$ such pairs that
must be considered. When $w = t / 2$, the number of distinct pairs
$(T_1, T_2)$ of adjacent task sets exceeds $2^t \ge 2^w$.

Thus, the probability bounds achieved by the random-permutation
algorithm are nowhere near high enough to enable a union bound over
all adjacent worker-set pairs. We call this the \defn{union-bound
  magnitude issue}.

\subsection{An algorithm with small switching cost}

We now describe a randomized assignment algorithm $\mathcal{A}$ that,
with high probability in $t$, achieves switching cost
$O(\log w \log t)$ on \emph{all} adjacent task-sets
$T_1, T_2 \subseteq [t]$ of size $w$. That is, with high probability, $\mathcal{A}$ produces an assignment function satisfying the requirements of Theorem \ref{thm:UB}.
 The algorithm $\mathcal{A}$ is called the \defn{multi-round balls-to-bins algorithm}. 

The multi-round balls-to-bins algorithm essentially flips the approach
taken by the random-permutation algorithm. One can think of the
random-permutation algorithm as consisting of $w$ phases in which each
phase \emph{deterministically} assigns exactly one worker to a task, and then
the phases \emph{probablistically} incur small switching cost. In
contrast, the multi-round balls-to-bins algorithm consists of
$O(\log w)$ phases, where each phase \emph{probabilistically} assigns
some number of workers to tasks, and each phase
\emph{deterministically} incurs small switching cost. Whereas the
failure mode of the random-permutation algorithm is that a
high-switching cost may occur, the failure mode of the multi-round
balls-to-bins algorithm is that some workers may be left
unassigned. As we shall see later, this distinction plays an important
role in solving the union-bound magnitude issue.

\paragraph{Structure of the multi-round balls-to-bins algorithm}
We begin with a succinct description of the algorithm
$\mathcal{A}$. For each $i$ from $1$ to $\log_{1.1} w$, repeat the
following hashing procedure $c\log t$ many times. Initialize a hash
table consisting of $w/(1.1)^i$ bins and randomly hash each unassigned
worker and each unassigned task into this table. For each bin that
contains at least one worker and one task, assign the minimum worker
in that bin to the minimum task in that bin.

In more detail, the algorithm $\mathcal{A}$ is the composition of $O(\log w)$
sub-algorithms $\mathcal{A}_1, \mathcal{A}_2, \ldots$.  Each of
$\mathcal{A}_1, \mathcal{A}_2, \ldots$ are \defn{partial-assignment}
algorithms, meaning that $\mathcal{A}_i$ assigns some subset of the
workers to some subset of the tasks in $T$, possibly leaving workers
and tasks unassigned. Note that the input to algorithm $\mathcal{A}_i$
is the set of workers/tasks that remain unassigned by
$\mathcal{A}_1, \ldots, \mathcal{A}_{i - 1}$.  Thus one can think of
the input to $\mathcal{A}_i$ as being a pair $(W, T)$ where
$W \subseteq [w]$ is a set of workers, $T \subseteq [t]$ is a set of
tasks, and $|W| = |T|$.

The algorithm $\mathcal{A}_1$'s responsibility is to assign enough
workers to tasks so that at most $w / 1.1$ workers remain
unassigned. Algorithm $\mathcal{A}_2$ is then executed on the
remaining (i.e., not-yet-assigned) workers and tasks, and is
responsible for assigning enough workers to tasks so that at most
$w / (1.1)^2$ workers remain unassigned. Continuing like this,
algorithm $\mathcal{A}_i$ is executed on the workers/tasks that remain
unassigned by all of $\mathcal{A}_1, \ldots, \mathcal{A}_{i - 1}$, and
is responsible for assigning enough workers to tasks that at most
$r_i = w / (1.1)^i$ workers in $W$ remain unassigned.

Each of the $\mathcal{A}_i$'s are randomized algorithms, meaning that
they have some probability of failure. The failure mode for
$\mathcal{A}_i$ is \emph{not} high-switching cost, however. In fact,
as we shall see later, each $\mathcal{A}_i$ deterministically
contributes at most $O(\log w)$ to the switching cost. Instead, the
way in which $\mathcal{A}_i$ can fail is that it may leave more than
$r_i$ workers unassigned. This means that the failure mode for the
full algorithm $\mathcal{A}$ is that it may fail to assign all of the
workers in $W$ to tasks in $T$. 

\paragraph{Applying the probabilistic method to
  $\mathcal{A}_1, \mathcal{A}_2, \ldots$}
Before describing the partial-assignment algorithms $\mathcal{A}_i$ in
detail, we first describe how our analysis of algorithm $\mathcal{A}$
overcomes the union-bound magnitude issue.

Recall that each algorithm $\mathcal{A}_i$ is responsible for reducing
the number of remaining workers to $r_i = w / (1.1)^i$. We will later
see that each $\mathcal{A}_i$ has a failure probability $p_i$ that is
a function of $r_i$ and $t$, namely,
\[p_i = \frac{1}{t^{\Omega(r_i)}}.\]
As $i$ grows, the failure probability $p_i$ of $\mathcal{A}_i$ becomes
larger, making it impossible to union-bound over exponentially many
pairs of task sets $T_1, T_2$.

An important insight is that, if all of
$\mathcal{A}_1, \ldots, \mathcal{A}_{i - 1}$ succeed (i.e., they each
assign the number of workers that they are responsible for assigning)
then the number of workers and tasks that $\mathcal{A}_{i - 1}$ is
executed on is only $O(r_i)$. That is, if we think of the inputs to
$\mathcal{A}_i$ as being pairs $(W, T)$ where $W \subseteq [w]$ is a
set of workers and $T \subseteq [t]$ is a set of tasks, the set of
inputs $(W, T)$ that algorithm $\mathcal{A}_{i - 1}$ must succeed on
is only the inputs for which $|W| = |T| \le O(r_i)$. The number of
such inputs is at most $t^{O(r_i)}$. In other words, even though the
failure probability $p_i$ of algorithm $\mathcal{A}_i$ increases with
$i$, the number of inputs over which we must apply a union bound
decreases. By a union bound, we can deduce that $\mathcal{A}_i$ has a
high probability in $t$ of succeeding on all relevant inputs $(W,
T)$. Combining this analysis over all of the partial-assigning
algorithms $\mathcal{A}_1, \mathcal{A}_2, \ldots$, we get that the
full assignment algorithm $\mathcal{A}$ also succeeds with high
probability in $t$. In particular, we have proven that there exists a deterministic assignment function with the desired switching cost, and that such a function can be obtained with high probability by the randomized algorithm $\mathcal{A}$.

\paragraph{Designing $\mathcal{A}_i$}
Each algorithm $\mathcal{A}_i$ is a composition of $\Theta(\log t)$
algorithms
$\mathcal{A}_{i, 1}, \mathcal{A}_{i, 2}, \mathcal{A}_{i, 3}, \ldots$,
each of which individually is a partial assignment algorithm.

Each algorithm $\mathcal{A}_{i, j}$ takes a simple balls-in-bins
approach to assigning some subset of the remaining workers to some
subset of the remaining tasks.

In particular, $\mathcal{A}_{i, j}$ places the workers into bins
$1, 2, \ldots, r_i$ by hashing each worker to a bin (using a random function from $[w]$ to $[r_i]$). Similarly, the
tasks are placed into bins $1, 2, \ldots, r_i$ by hashing each task to
a bin. If a bin $b$ contains both at least one worker and at least one
task, then the smallest-numbered worker in bin $b$ is assigned to the
smallest-number task in bin $b$.

Note that each of the algorithms
$\mathcal{A}_{i, 1}, \mathcal{A}_{i, 2}, \mathcal{A}_{i, 3}, \ldots$
are identical copies of one-another, except using different random
bits. Also note all of the $\mathcal{A}_i$'s are defined in the same way as each other, except the number of bins hashed to decreases as $i$ increases.  As we shall see shortly, the reason for having $\mathcal{A}_i$
consist of $\Theta(\log t)$ sub-algorithms is to enable probability
amplification later in the analysis.

\paragraph{Bounding the switching cost}
The partial assignment algorithms $\mathcal{A}_{i, j}$ are designed to
satisfy two essential properties, which we prove formally in the full proof. These two properties can then be combined to bound
the switching cost of the full algorithm $\mathcal{A}$. 

\begin{description}
\item[Compatibility:] Let $I_1 = (W_1, T_1)$ and
  $I_2 = (W_2, T_2)$ be inputs to $\mathcal{A}_{i, j}$. Suppose $I_1$
  and $I_2$ are \defn{unit distance}, meaning that
    \[|W_1\setminus W_2|+|W_2 \setminus W_1| + |T_1\setminus T_2|+|T_2 \setminus T_1| \le 2.\]
  Let $I_1' = (W_1', T_1')$ and $I_2' = (W_2', T_2')$ be the workers
  and tasks that remain unassigned when $\mathcal{A}_{i, j}$ is
  executed on each of $I_1$ and $I_2$, respectively. Then $I_1'$ and
  $I_2'$ are guaranteed to also be unit-distance.
\item[Low Switching Cost:] The switching cost of
  $\mathcal{A}_{i, j}$ is $O(1)$. That is, if $I_1 = (W_1, T_2)$ and
  $I_2 = (W_2, T_2)$ are inputs to $\mathcal{A}_{i, j}$, and $I_1$ and
  $I_2$ are unit-distance, then the worker-task assignments made by
  $\mathcal{A}_{i, j}$ on each of $I_1$ and $I_2$ differ by at most
  $O(1)$ assignments.
\end{description}

Consider two adjacent task sets $T_1$ and $T_2$. When we execute
$\mathcal{A}$ on $T_1$ and $T_2$, respectively, we use $I_1^{i, j}$
and $I_2^{i, j}$, respectively, to denote the worker/task input that
are given to partial-assignment algorithm $\mathcal{A}_{i,
  j}$.

The Compatibility property of the $\mathcal{A}_{i, j}$'s guarantees by
induction that, for each $\mathcal{A}_{i, j}$ the worker/task inputs
$I_1^{i, j}$ and $I_2^{i, j}$ are unit-distance (or
zero-distance). The Low-Switching-Cost property then guarantees that
each $\mathcal{A}_{i, j}$ contributes at most $O(1)$ to the switching
cost of $\mathcal{A}$. Since there are $O(\log t \log w)$
$\mathcal{A}_{i, j}$'s, this bounds the total switching cost of
$\mathcal{A}$ by $O(\log t \log w)$.

\paragraph{Deriving the success probabilities}
Next we analyze the probability of $\mathcal{A}_i$ failing on a given
worker/task input $(W, T)$. Recall that the only way in which
$\mathcal{A}_i$ might fail is if more than $r_i$ workers remain
unassigned after $\mathcal{A}_i$ finishes. Additionally, since we need
only consider cases where $\mathcal{A}_{i - 1}$ succeeds, we can
assume that $r_i \le |W|, |T| \le 1.1 r_i$.

Let $q$ denote the number of workers that $\mathcal{A}_{i, 1}$ assigns
to tasks. Given that $r_i \le |W|, |T| \le 1.1 r_i$, a simple analysis
of $\mathcal{A}_{i, 1}$ shows that $\E[q] \ge r_i / 5$. On the other
hand, using McDiarmid's inequality, one can perform a balls-in-bins
style analysis in order to show that
$\Pr[\E[q] - q > r_i / 10] \le 2^{-\Omega(r_i)}$. This means that
$\mathcal{A}_{i, 1}$ has probability at most $2^{-\Omega(r_i)}$ of
leaving more than $r_i$ workers unassigned.

In order for $\mathcal{A}_i$ to fail (i.e., $\mathcal{A}$ leaves more
than $r_i$ workers unassigned), \emph{all} of sub-algorithms
$\mathcal{A}_{i , 1}, \mathcal{A}_{i, 2}, \ldots$ would have to
individually fail. Since there are $\Theta(\log t)$ sub-algorithms,
the probability of them all failing is
\[p_i = 2^{-\Omega(r_i \log t)} = t^{-\Omega(r_i)}.\]
This allows us to apply the probabilistic method to the
$\mathcal{A}_i$'s in order to bound the probability of any
$\mathcal{A}_i$ failing on any input, as desired.

\paragraph{An explicit construction with polylogarithmic switching cost}
The multi-round balls-to-bins algorithm gives a non-explicit approach
to constructing an assignment function with low switching cost. The
approach is non-explicit because it relies on the probabilistic
method.

We also show how to obtain an explicit algorithm with switching cost
$\polylog wt$. The basic idea is to replace random hash functions,
used to place workers and tasks into bins,
with functions obtained from pseudorandom objects called
strong dispersers. Instead of trying a number of random hash functions
within the $\mathcal{A}_{i,j}$'s, we instead iterate over all of the hash functions from a small family given by a strong disperser~\cite{Meka14}.

\subsection{A lower bound on switching cost}

Define $s_{w,t}$ to be the optimal switching cost for assignment functions that assign workers $1, 2, \ldots, w$ to multisets of $w$ tasks from the universe $[t]$. The upper bounds in this paper establish that $s_{w, t} \le O(\log w \log (wt))$. It is natural to wonder whether smaller bounds can be achieved, and in particular, whether a small switching cost that depends only on $w$ can be achieved.

It trivially holds that $s_{w, t} \le w$. We show that when $t$ is sufficiently large relative to $w$, there is a matching lower bound of $s_{w, t} \ge w$. In fact, our lower bound only uses the evaluation of the assignment function on sets (as opposed to multisets).

Consider an assignment function $\phi$ that, given a multiset $T$ of tasks with elements from $[t]$ of $w$ tasks, produces an assignment of workers $[w]$ to tasks $T$. Our goal will be to find tasks $\tau_1<\tau_2<\cdots<\tau_{w+1}$ such that if $\phi(\{\tau_1,\ldots,\tau_{w}\})$ assigns worker $i$ to task $\tau_{\pi(i)}$ for some permutation $\pi$ of $[w]$, then $\phi(\{\tau_2,\ldots,\tau_{w+1}\})$ assigns worker $i$ to task $\tau_{\pi(i)+1}$. The existence of such a configuration immediately implies that $\phi$ has switching cost $w$.

We use an application of the hypergraph Ramsey theorem to show that, when $t$ is large enough, a configuration of the type described in the above paragraph must exist. Let $K_t^{(w)}$  denote the complete $w$-uniform hypergraph on $t$ vertices. This is just the set of $w$-element subsets of $[t]$, which correspond to sets of tasks. For each hyperedge $T=\{\tau_1,\ldots,\tau_w\}$, where $1\leq \tau_1<\cdots<\tau_w\leq t$, we color the hyperedge $T$ by a color $\pi$ where $\tau_{\pi(i)}$ is the task assigned to worker $i$.

This gives a coloring of the hyperedges of $K_t^{(w)}$ by $w!$ colors, each color being a permutation of $[w]$. By the hypergraph Ramsey theorem, if $t$ is large enough in terms of $w$, there must exist $w+1$ vertices $\tau_1,\ldots,\tau_{w+1}$ so all the hyperedges formed by the vertices have the same color $\pi$. By examining the hyperedges $\{\tau_1, \ldots, \tau_w\}$ and $\{\tau_2, \ldots, \tau_{w + 1}\}$, it follows that $\phi(\{\tau_1,\ldots,\tau_{w}\})$ assigns each worker $i$ to task $\tau_{\pi(i)}$ and that $\phi(\{\tau_2,\ldots,\tau_{w+1}\})$ assigns each worker $i$ to task $\tau_{\pi(i)+1}$, as desired.

\section{Achieving switching cost \texorpdfstring{$O(\log w \log (wt))$}{O(log w log (wt))}}\label{sec:ub}

In this section, we prove the following theorem.

\thmUB*

We demonstrate the existence of such a function via the probabilistic method, showing that there is a randomized construction that produces a low-switching cost worker-task assignment function with nonzero probability. In Section~\ref{sec:derandomizing} we also show how to derandomize the construction at the cost of a few extra log factors.

\paragraph{From multisets to sets}
We begin by showing that, without loss of generality, we can restrict our attention to task multisets $T$ that are sets (rather than multisets). We reduce from the multiset version of the problem with $w$ workers and $t$ tasks to the set version of the problem with $w$ workers and $wt$ tasks.

\begin{lemma}
  Define $n = wt$. Let $\phi$ be a worker-task assignment function that
  assigns workers $[w]$ to task sets $T \subseteq[n]$ (note that $\phi$ is
  defined only on task \emph{sets} $T$, and not on multisets).  Let $s$ be
  the switching cost of $\phi$ (considering only pairs of adjacent
  subsets of $[n]$, rather than adjacent sub-multisets). Then there
  exists a worker-task assignment function $\phi'$ assigning workers
  $[w]$ to task multisets $T \subseteq [t]$, such that $\phi'$ also has
  switching cost $s$.
  \label{lem:multiset_reduction}
\end{lemma}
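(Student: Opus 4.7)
The plan is to define a canonical embedding of multisets into sets and then pull back $\phi$ through it. Identify $[n] = [wt]$ with $[t]\times[w]$, and for any multiset $T\subseteq[t]$ of size $w$ with multiplicities $m_T(\cdot)$, define
\[ \iota(T) = \{(j,k) : j\in[t],\ 1 \le k \le m_T(j)\} \subseteq [n]. \]
Then $|\iota(T)| = \sum_j m_T(j) = w$, so $\iota(T)$ is a valid input to $\phi$. Define $\phi'(T)$ as the assignment obtained by first computing $\phi(\iota(T))$ and then relabeling every element $(j,k)$ by its first coordinate $j$. This is a legal worker-task assignment for $T$, since for each task $j$ exactly $m_T(j)$ workers are assigned to some element of the form $(j,k)$, matching the required multiplicity.

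The key step is to check that $\iota$ preserves adjacency. Suppose $T_1$ and $T_2$ are adjacent multisets with $T_2=(T_1\setminus\{s\})\cup\{r\}$ (where possibly $s=r$, but in that case $T_1=T_2$). Then $m_{T_2}(s)=m_{T_1}(s)-1$ and $m_{T_2}(r)=m_{T_1}(r)+1$, while all other multiplicities agree. Working out the symmetric difference using the definition of $\iota$,
\[ \iota(T_1)\setminus\iota(T_2) = \{(s,m_{T_1}(s))\}, \qquad \iota(T_2)\setminus\iota(T_1) = \{(r,m_{T_1}(r)+1)\}, \]
so $\iota(T_1)$ and $\iota(T_2)$ are adjacent sets in $[n]$.

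Finally, bound the switching cost of $\phi'$. Fix adjacent multisets $T_1,T_2$. For any worker $i$, if $\phi(\iota(T_1))$ and $\phi(\iota(T_2))$ assign $i$ to elements whose first coordinates agree (say both of the form $(j,\cdot)$), then $\phi'$ assigns $i$ to the same task $j$ in both $T_1$ and $T_2$, contributing nothing to the multiset switching cost. Hence the number of workers whose assignment under $\phi'$ changes between $T_1$ and $T_2$ is at most the number whose assignment under $\phi$ changes between $\iota(T_1)$ and $\iota(T_2)$, which by assumption is at most $s$. Taking the maximum over all adjacent multiset pairs gives that $\phi'$ has switching cost at most $s$.

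There is no real obstacle here; the argument is essentially bookkeeping. The only point that requires a bit of care is being explicit about the adjacency computation above, since the definitions of multiset union/intersection/difference are not completely standard. Everything else follows immediately from the construction of $\iota$ and the projection $(j,k)\mapsto j$.
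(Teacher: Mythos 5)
Your proposal is correct and is essentially identical to the paper's proof: the embedding $\iota$ is the paper's $\si$, the projection $(j,k)\mapsto j$ is the same, and both the adjacency preservation and the switching-cost bookkeeping match the paper step for step. No meaningful differences.
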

\begin{proof}
  When discussing the assignment function $\phi$, we think of its input task-set $T$ as consisting of elements from $[t] \times [w]$ rather than elements of $[tw]$.
  
  With this in mind, we construct $\phi'$  as follows. Given a task multiset $T \subseteq [t]$, define the set $\si(T) \subseteq [t] \times [w]$ to be $\bigcup_{i = 1}^t \big\{(i,1), \ldots, (i, m_T(i))\big\}$, where $m_T(i)$ is the multiplicity of $i$ in $T$. The worker-task assignment $\phi$ produces some bijection $\psi_{\si(T)}: [w] \to \si(T)$. Similarly, $\phi'$ should produce some bijection $\psi'_T:[w] \to T$. This bijection is defined naturally by projection: if $\psi_{\si(T)}$ assigns worker $j$ to task $(i, x)$, let $\psi'_T$ assign worker $j$ to task $i$. 

  We now compute the switching cost of $\phi'$. Let $T$ and $T'$ be two adjacent task multisets, so $T' = T \cup \{a\} \setminus \{b\}$ for some $a,b \in [t]$. Then $\si(T') = \si(T) \cup \{(a, m_T(a) + 1)\} \setminus \{(b, m_T(b))\}$, and so $\si(T')$ is adjacent to $\si(T)$. Since $\phi$ has switching cost $s$, $\psi_{\si(T)}$ and $\psi_{\si(T')}$ agree on $w - s$ workers. By construction, $\psi'_T$ and $\psi'_{T'}$ must agree on these $w-s$ workers as well, and so it too has switching cost at most $s$.
\end{proof}

In the remainder of the section, we will make the assumption that $T$
is a subset of $[n]$, and we will show how to design an assignment
function with switching cost $O(\log w \log n)$ on all pairs of
adjacent subsets of $[n]$. By Lemma \ref{lem:multiset_reduction},
setting $n = wt$ then implies Theorem \ref{thm:UB}.

\paragraph{Designing an assignment function as an algorithm} It will be helpful to think of the function we construct for assigning workers to tasks as an algorithm $\mathcal{A}$, which we call the \defn{multi-round balls-to-bins algorithm}. The algorithm $\mathcal{A}$ takes as input a set $T \subseteq [n]$ of tasks with $|T| = w$ and must produce a bijection from the workers $[w]$ to $T$.

The algorithm constructs this bijection in stages. Each stage is what we call a \defn{partial assignment algorithm}, which takes as input the current sets of workers and tasks that have yet to be matched and assigns some subset of these workers to some subset of the tasks. Formally, we define a partial assignment algorithm to be any function $\psi$ which accepts as input any pair of sets $T \subseteq [n],W \subseteq [w]$ with $|T| = |W|$ and produces a matching between some subset of $T$ and some subset of $W$. After applying $\psi$ to $(T,W)$, there may remain some unmatched elements $T' \subseteq T$, $W' \subseteq W$. We call $(T,W)$ the \defn{worker-task input} to $\psi$ and $(T',W')$ the \defn{worker-task output}. Since a matching must remove exactly as many elements from $T$ as it does from $W$, we must also have $|W'| = |T'|$. Consequently, there is a natural notion of the \defn{composition} of two partial assignment algorithms: the composition $\psi' \circ \psi$ applies $\psi$ and then $\psi'$, letting the worker-task output of $\psi$ be the worker-task input to $\psi'$.

\paragraph{The algorithm} 
We recall the description of the algorithm $\mathcal{A}$. For each $i$ from $1$ to $c\log w$, repeat the following hashing procedure $c\log n$ many times. Initialize a hash table consisting of $w/(1.1)^i$ bins and randomly hash each unassigned worker and each unassigned task into this table. For each bin that contains at least one worker and one task, assign the minimum worker in that bin to the minimum task in that bin. 

In more detail, our algorithm $\mathcal{A}$ is the composition of $\log_{1.1} w$ partial-assignment algorithms,
\[\mathcal{A} = \mathcal{A}_1 \circ \mathcal{A}_2 \circ \cdots \circ \mathcal{A}_{\log_{1.1} w}.\]
Let $c$ be a large positive constant. Each $\mathcal{A}_i$ is itself the composition of $c \log n$ partial-assignment algorithms,
	\[\mathcal{A}_i = \mathcal{A}_{i, 1} \circ \mathcal{A}_{i, 2} \circ \cdots \circ \mathcal{A}_{i, c \log n}.\]

\paragraph{Designing the parts} Each $\mathcal{A}_{i, j}$ assigns workers to tasks using what we call a \defn{$w / (1.1)^{i}$-bin hash}, which we define as follows.

For a given parameter $k$, a \defn{$k$-bin hash} selects functions $h_1:[w] \rightarrow [k]$ and $h_2:[n] \rightarrow [k]$ independently and uniformly at random. For each worker $\omega \in [w]$, we say that $\omega$ is \defn{assigned} to bin $h_1(\omega)$. Similarly, for each $\tau \in [n]$ we say $\tau$ is assigned to $h_2(\tau).$ These functions are then used to construct a partial assignment. Given a worker-task input $(W,T)$, we restrict our attention to only the assignments of workers in $W$ and tasks in $T$. In each bin $\kappa \in [k]$ with at least one worker and one task assigned, match the smallest such worker to the smallest such task. Importantly, once $h_1$ and $h_2$ are fixed, the algorithm $\mathcal{A}_{i, j}$ uses this same pair of hash functions for every worker-task input, which (as we will see later) is what allows it to make very similar assignments for similar inputs and achieve low switching cost. 

We set each $\mathcal{A}_{i, j}$ to be an independent random instance of the $k$-bin hash, where $k = w / (1.1)^{i}$. Formally, this means that the algorithm $\mathcal{A} = \mathcal{A}_{1, 1} \circ \cdots \circ \mathcal{A}_{\log_{1.1} w, c \log n}$ is a random variable whose value is a partial-assignment function. Our task is thus to prove that, with non-zero probability, $\mathcal{A}$ fully assigns all workers to tasks and has small switching cost.

\paragraph{Analyzing the algorithm}
In Section \ref{sec:bounding_switching_cost},
we show that $\mathcal{A}$ \emph{deterministically} has switching cost $O(\log w \log n)$.
	
Although $\mathcal{A}$ always has small switching cost, the algorithm is \emph{not} always a legal worker-task assignment function. This is because the algorithm may sometimes act as a \emph{partial} worker-task assignment function, leaving some workers and tasks unassigned.
	
In Section \ref{sec:bound_failure_prob}, we show that with probability greater than $0$ (and, in fact, with probability $1 - 1 / \poly n$), the algorithm $\mathcal{A}$ succeeds at fully assigning workers to tasks for \emph{all} worker-task inputs $(W, T)$. Theorem \ref{thm:UB} follows by the probabilistic method.

\subsection{Bounding the probability of failure}\label{sec:bound_failure_prob}

Call a partial-assignment algorithm $\psi$ \defn{fully-assigning} if for every worker/task input $(W, T)$, $\psi$ assigns all of the workers in $W$ to tasks in $T$. That is, $\psi$ never leaves workers unassigned.

\begin{proposition} The multi-round balls-to-bins algorithm $\mathcal{A}$ is fully-assigning with high probability in $n$. That is, for any polynomial $p(n)$, if the constant $c$ used to define $\mathcal{A}$ is sufficiently large, then $\mathcal{A}$ is fully-assigning with probability at least $1 - O(1/p(n))$.
 \label{prop:bound_failure_prob}
\end{proposition}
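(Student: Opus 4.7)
The plan is to decompose the success of $\mathcal{A}$ into the successes of the individual stages $\mathcal{A}_i$ and then use the probabilistic method in two nested layers. Set $r_i=\lceil w/(1.1)^i\rceil$ for $i=0,1,\ldots,I$ with $I=\lceil\log_{1.1} w\rceil$, and for each $i$ let $E_i$ be the event that, \emph{for every} worker-task input $(W,T)$ with $|W|=|T|\leq r_{i-1}$, the partial assignment produced by $\mathcal{A}_i$ on $(W,T)$ leaves at most $r_i$ workers unassigned. A straightforward induction shows that on $\bigcap_i E_i$ the algorithm $\mathcal{A}$ is fully-assigning, since the final stage has $r_I=1$ bin and deterministically matches any single remaining worker and task. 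Thus it suffices to bound $\sum_i \Pr[\bar E_i]$.

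Next I would analyze a single sub-round $\mathcal{A}_{i,j}$ on a fixed input $(W,T)$ with $|W|=|T|=m\in(r_i, r_{i-1}]$. Let $X$ denote the number of bins receiving at least one worker and at least one task under $\mathcal{A}_{i,j}$; this is exactly the number of matches produced, so the new unassigned count is $m-X$. A direct balls-and-bins computation gives
\[\mathbb{E}[X]=r_i\bigl(1-(1-1/r_i)^m\bigr)^2\;\geq\;r_i(1-1/e)^2\;\geq\;0.4\,r_i\]
when $m\geq r_i$. Since changing any one of the $2m$ hash values shifts $X$ by at most $2$, McDiarmid's bounded-differences inequality yields $\Pr[X<0.1\,r_i]\leq \exp(-\Omega(r_i^2/m))\leq 2^{-\Omega(r_i)}$. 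Because we only need $X\geq m-r_i\leq 0.1\,r_i$ to push the unassigned count below $r_i$, a single sub-round fails with probability at most $q_i=2^{-\Omega(r_i)}$ (and at worst an absolute constant $q_0<1$ when $r_i$ is small).

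I would then amplify across the $c\log n$ independent sub-rounds comprising $\mathcal{A}_i$ via iterated conditioning. The unassigned count is monotone non-increasing and each sub-round draws fresh randomness, so the probability that \emph{every} sub-round leaves the count above $r_i$ is at most $q_i^{c\log n}=n^{-\Omega(c r_i)}$. The number of possible inputs $(W,T)$ with $|W|=|T|\leq r_{i-1}$ is at most $n^{O(r_i)}$, so a union bound gives $\Pr[\bar E_i]\leq n^{O(r_i)-\Omega(c r_i)}$, which by taking $c$ large enough in terms of the target polynomial $p(n)$ becomes at most $1/(2I\cdot p(n))$. A final union bound over the $I=O(\log w)$ stages then delivers the claimed $O(1/p(n))$ total failure probability.

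The main obstacle I foresee is the small-$r_i$ regime: when $r_i$ is constant, McDiarmid only gives a constant single-trial failure probability, so one has to rely on the observation that $q_0^{c\log n}=n^{-\Omega(c)}$ combined with $n^{O(1)}$ possible inputs still suffices once $c$ is large. A related conceptual subtlety, and in fact the very point of the two-layered analysis, is that $E_i$ is a universally quantified event over \emph{all} potential intermediate inputs rather than the random input that $\mathcal{A}_i$ actually receives. This decoupling is what allows the union bound to scale like $n^{O(r_i)}$ rather than the much larger count of conceivable input trajectories, which is exactly the resolution of the union-bound magnitude issue flagged in Section~2.
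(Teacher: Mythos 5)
Your proposal is correct and follows essentially the same route as the paper's proof: the event $E_i$ you define is exactly the paper's notion of $\mathcal{A}_i$ being $r_i$-halving, the single-sub-round McDiarmid argument matches Lemma~3.3 (the paper uses the looser bound $1-2/e\ge 1/4$ on the per-bin activation probability and bounded-difference constant $1$, but these are cosmetic), and the amplification over $c\log n$ sub-rounds followed by a union bound over the $n^{O(r_i)}$ small inputs and then over the $O(\log w)$ stages is precisely Lemma~3.4 and the proof of the proposition. Your closing remark about universally quantifying $E_i$ over all small $(W,T)$ rather than over trajectories is exactly the resolution of the union-bound magnitude issue the paper emphasizes.
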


Proposition \ref{prop:bound_failure_prob} tells us that with high probability in $n$, $\mathcal{A}$ succeeds at assigning all workers on \emph{all} inputs. We remark that this is a much stronger statement than saying that $\mathcal{A}$ succeeds with high probability in $n$ on a \emph{given} input $(W, T)$.

The key to proving Proposition \ref{prop:bound_failure_prob} is to show that each $\mathcal{A}_i$ performs what we call \defn{$\left(w / (1.1)^i\right)$-halving}.  A partial-assignment function $\psi$ is said to perform \defn{$k$-halving} if for every worker/task input $(W, T)$ of size at most $1.1 k$, the worker-task output $(W', T')$ for $\psi(W, T)$ has size at most $k$. 

If every $\mathcal{A}_i$ performs $w / (1.1)^i$-halving, then it follows that
\[\mathcal{A}_1 \circ \cdots \circ \mathcal{A}_{\log_{1.1} w}\]
is a fully-assigning algorithm. Thus our task is to show that each $\mathcal{A}_i$ performs $w / (1.1)^i$-halving with high probability in $n$.

We begin by analyzing the $k$-bin hash on a given worker/task input $(W, T)$.

\begin{lemma} Let $\psi$ a randomly selected $k$-bin hash. Let $(W, T)$ be a worker/task input satisfying $|W| = |T| \le 1.1 k$, and let $(W', T')$ be the worker/task output of $\psi(W, T)$. The probability that $(W', T')$ has size $k$ or larger is $2^{-\Omega(k)}$.
  \label{lem:balls-to-bins-failure}
\end{lemma}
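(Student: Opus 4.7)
Let $m=|W|=|T|\le 1.1k$, and let $X$ be the number of bins $b\in[k]$ that receive at least one worker from $W$ and at least one task from $T$. Each such bin contributes exactly one matched (worker, task) pair, so the worker--task output of $\psi(W,T)$ has size exactly $m-X$. Hence the event $|W'|\ge k$ is exactly $\{X\le m-k\}$, and our goal is to upper bound its probability by $2^{-\Omega(k)}$. If $m<k$ then $m-k<0\le X$, so the event is impossible; thus we may assume $k\le m\le 1.1k$.

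The plan is a straightforward application of McDiarmid's bounded-differences inequality, viewing $X$ as a function of the $2m$ independent random variables $\{h_1(\omega):\omega\in W\}\cup\{h_2(\tau):\tau\in T\}$. First, I estimate the expectation: by symmetry and by independence of $h_1$ and $h_2$,
\[
\E[X] \;=\; k\cdot\Pr\bigl[\text{bin 1 receives some worker in }W\bigr]\cdot\Pr\bigl[\text{bin 1 receives some task in }T\bigr] \;=\; k\bigl(1-(1-1/k)^{m}\bigr)^{2}.
\]
Since $m\ge k$, the quantity $(1-1/k)^m\le e^{-m/k}\le e^{-1}$, so $\E[X]\ge k(1-1/e)^2>0.39\,k$.

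Next, I verify the key bounded-differences property: modifying a single coordinate---say the bin of one worker $\omega$, from $b_0$ to $b_1$---can change $X$ by at most $1$ in absolute value. Indeed, the only affected indicators are ``bin $b_0$ contains a worker'' and ``bin $b_1$ contains a worker''; the first can flip from $1$ to $0$ (decreasing $X$ by at most $1$, depending on whether $b_0$ also holds a task), the second can flip from $0$ to $1$ (increasing $X$ by at most $1$, depending on whether $b_1$ also holds a task), so the net change lies in $\{-1,0,+1\}$. The analogous statement holds for modifying the bin of a task. This I expect to be the only mildly delicate step---making sure the two simultaneous effects (one bin losing its worker, another gaining one) combine to change $X$ by at most $1$ rather than $2$.

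With bounded differences $c_i=1$ on all $2m\le 2.2k$ coordinates, McDiarmid gives
\[
\Pr\bigl[X\le \E[X]-\lambda\bigr]\;\le\;\exp\!\Bigl(-\tfrac{2\lambda^2}{2m}\Bigr)\;=\;\exp(-\lambda^2/m).
\]
Taking $\lambda=\E[X]-(m-k)\ge 0.39k-0.1k=0.29k$ yields
\[
\Pr[X\le m-k]\;\le\;\exp\!\bigl(-(0.29k)^2/(1.1k)\bigr)\;=\;2^{-\Omega(k)},
\]
which is the desired bound. Plugging into Proposition~\ref{prop:bound_failure_prob}, amplification across the $c\log n$ independent sub-algorithms $\mathcal{A}_{i,1},\dots,\mathcal{A}_{i,c\log n}$ will then drive the per-input failure probability of $\mathcal{A}_i$ down to $t^{-\Omega(r_i)}$, enabling the union bound described in the technical overview.
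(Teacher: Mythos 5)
Your proposal is correct and follows essentially the same route as the paper: define $X$ as the number of active bins, lower-bound $\E[X]$ by $\Omega(k)$, and apply McDiarmid's inequality with bounded differences $1$ over the $2m\le 2.2k$ independent hash values. The only (immaterial) differences are that you compute $\E[X]$ exactly as $k(1-(1-1/k)^m)^2$ where the paper uses a union bound to get $\E[X]\ge k/4$, and you aim directly at the threshold $X\le m-k$ where the paper shows $X\ge k/8$ with high probability.
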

\begin{proof} We may assume that $|W|=|T|\geq k$, else the conclusion is trivially true. Let $X$ be the random variable denoting the number of worker/task assignments made by $\psi(W, T)$. Equivalently, $X$ counts the number of bins to which at least one worker is assigned and at least one task is assigned---call these the \defn{active bins}. We will show that $\Pr[X < \frac{k}{8}] \le 2^{-\Omega(k)}$. Since $|W| = |T| \leq 1.1k$, this immediately implies that $|W'| = |T'| \leq 1.1k-0.125k \leq k$ with probability $1-2^{-\Omega(k)}$, as desired.

  We begin by computing $\E[X]$. For each bin $j \in [k]$, the probability no workers are assigned to bin $j$ is $(1 - 1/k)^{|W|} \le (1 - 1/k)^{k} \le 1/e$. Similarly, the probability that no tasks are assigned to bin $j$ is at most $(1 - 1/k)^{|T|} \le 1/e$. The probability of bin $j$ being active is therefore at least $1 - 2/e \ge 1/4$. By linearity of expectation, $\E[X] \ge k / 4$.
  
  Next we show that the random variable $X$ is tightly concentrated around its mean. Because the bins that are active are not independent of one-another, we cannot apply a Chernoff bound. Instead, we employ McDiarmid's inequality:

  \begin{theorem}[McDiarmid '89 \cite{McDiarmid89}] Let $A_1, \ldots, A_m$ be independent random variables over an arbitrary probability space. Let $F$ be a function mapping $(A_1, \ldots, A_m)$ to $\mathbb{R}$, and suppose $F$ satisfies,
    \[\sup_{a_1, a_2, \ldots, a_m, \overline{a_i}} |F(a_1, a_2, \ldots, a_{i - 1}, a_i, a_{i + 1}, \ldots , a_m) - F(a_1, a_2, \ldots, a_{i - 1}, \overline{a_i}, a_{i + 1}, \ldots , a_m)| \le R,\]
    for all $1 \le i \le m$. That is, if $A_1, A_2, \ldots, A_{i - 1}, A_{i + 1}, \ldots, A_m$ are fixed, then the value of $A_i$ can affect the value of $F(A_1, \ldots, A_m)$ by at most $R$. Then for all $S > 0$,
    \[\Pr[|F(A_1, \ldots, A_m) - \E[F(A_1, \ldots, A_m)]| \ge R \cdot S] \le 2e^{-2S^2 / m}.\]
  \end{theorem}

  The number of active bins $X$ is a function of at most $2.2 \cdot k$ independent random variables (namely, the hashes $h_1(\omega)$ for each $\omega \in W$ and the hashes $h_2(\tau)$ for each $\tau \in T$). Each of these random variables can individually change the number of active bins by at most one. It follows that we can apply McDiarmid's inequality with $R = 1$ and $m = 2.2k$. Taking $S = k/8$, we obtain
  \[\Pr[|X - \E[X]| \ge k / 8] \le  e^{-\Omega(k)}.\]
  Since $\E[X] \ge k / 4$, we have that $\Pr[X < k / 8] \le e^{-\Omega(k)}$, which completes the proof of the lemma.
\end{proof}

Our next lemma shows that each $\mathcal{A}_i$ is $k$-halving with high probability in $n$, where $k = w / (1.1)^i$.

\begin{lemma} Let $\psi_1, \ldots, \psi_{c \log n}$ be independent random $k$-bin hashes, and let $\psi = \psi_{1} \circ \cdots \circ \psi_{c \log n}$. With high probability in $n$, $\psi$ is $k$-halving. That is, every worker-task input $(W,T)$ with $|W| = |T| \le 1.1k$ has a worker task output $(W', T')$ with $|W'| = |T'| \leq k$.
  \label{lem:k-halving}
\end{lemma}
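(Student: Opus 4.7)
The plan is to bound the failure probability of $\psi$ on any fixed input $(W,T)$ with $|W|=|T|\le 1.1k$ by $2^{-\Omega(ck\log n)}$, and then take a union bound over all such inputs. The exponential-vs.-polynomial tradeoff will give polynomial-in-$n$ failure probability overall, provided $c$ is chosen sufficiently large.

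First I would fix an input $(W, T)$ with $|W|=|T|\le 1.1k$ and show $\Pr[\psi \text{ leaves size} > k] \le 2^{-\Omega(ck\log n)}$. If the initial size is already $\le k$, then the output size is automatically $\le k$ because partial assignments can only shrink the remaining worker/task sets, so we may assume the size lies in $(k, 1.1k]$. Let $E_j$ be the event that the worker-task output after applying $\psi_j\circ\cdots\circ\psi_1$ to $(W,T)$ still has size greater than $k$. Since sizes never grow along the composition, $E_{c\log n}\subseteq \cdots \subseteq E_1$, so
\[\Pr[E_{c\log n}] \;=\; \prod_{j=1}^{c\log n} \Pr[E_j\mid E_{j-1}],\]
where $E_0$ is the entire probability space. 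Conditioned on any outcome of $\psi_1,\ldots,\psi_{j-1}$ lying in $E_{j-1}$, the input fed into $\psi_j$ has size in $(k, 1.1k]$. Because $\psi_j$ is an independent random $k$-bin hash, I can invoke Lemma~\ref{lem:balls-to-bins-failure} in this conditional world to conclude $\Pr[E_j\mid E_{j-1}] \le 2^{-\Omega(k)}$. Multiplying over all $j$ gives the claimed per-input bound $2^{-\Omega(ck\log n)}$.

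Next I would count inputs for the union bound. The number of pairs $(W,T)$ with $W\subseteq[w]$, $T\subseteq[n]$, and $|W|=|T|=s\le 1.1k$ is at most $\binom{w}{s}\binom{n}{s}\le n^{2s}$, and summing over $s\le 1.1k$ gives at most $n^{O(k)}$ relevant inputs. By the union bound, the probability that $\psi$ fails to be $k$-halving is at most
\[n^{O(k)}\cdot 2^{-\Omega(ck\log n)} \;=\; 2^{O(k\log n)-\Omega(ck\log n)},\]
which is $n^{-\Omega(ck)} \le n^{-\Omega(c)}$ once $c$ is chosen large enough relative to the constants absorbed into $O(\cdot)$ and $\Omega(\cdot)$. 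This is polynomially small in $n$, proving the lemma.

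The main subtlety is the conditioning step: to invoke Lemma~\ref{lem:balls-to-bins-failure} inside the product I must ensure that, conditioned on any history of $\psi_1,\ldots,\psi_{j-1}$ consistent with $E_{j-1}$, the hash $\psi_j$ is still distributed as a fresh random $k$-bin hash on an input of size at most $1.1k$. Independence of the $\psi_j$'s and the monotonicity of sizes along the composition handle both points; the rest is routine probability bookkeeping.
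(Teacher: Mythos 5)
Your proof is correct and follows essentially the same strategy as the paper's: fix an input, argue $\Pr[\text{still} > k \text{ after round } j \mid \text{still} > k \text{ after round } j-1] \le 2^{-\Omega(k)}$ via Lemma~\ref{lem:balls-to-bins-failure} and the monotonicity of sizes under composition, multiply over $c\log n$ independent rounds to get a per-input failure bound of $n^{-\Omega(ck)}$, and union-bound over the $n^{O(k)}$ inputs of size at most $1.1k$. The paper phrases the per-round step as $p_i \le e^{-\Omega(k)} p_{i-1}$ rather than writing out the conditional product, but this is the identical argument; your more explicit treatment of the conditioning (that $\psi_j$ remains a fresh independent hash given any history, and that the input to $\psi_j$ necessarily has size in $(k,1.1k]$ under the conditioning) is a correct and slightly more careful rendering of the same reasoning.
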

\begin{proof} Fix an arbitrary worker-task input $(W,T)$ with $|W| = |T| \le 1.1 k$. Let $(W_i, T_i)$ denote the worker-task output after applying the first $i$ rounds, $\psi_1 \circ \cdots \circ \psi_i$. Let $p_i$ denote the probability that $ |W_i| = |T_i| > k$.

  First, we observe that $p_i \le e^{-\Omega(k)} p_{i-1}$ for all $i > 1$. Indeed, for $|W_i| = |T_i| > k$, we must necessarily have $ |W_{i-1}| = |T_{i-1}| > k$, which occurs with probability $p_{i-1}$, but in this situation, the probability that $\psi_i$ produces a worker-task output of size greater than $k$ is a further $e^{-\Omega(k)}$ by Lemma \ref{lem:balls-to-bins-failure}.

  The probability that $\psi$ fails to reduce the size of $(W,T)$ to $k$ or smaller is thus at most \begin{equation}
    p_{c \log n} \le e^{-\Omega(ck \log n)} \le n^{-\Omega(ck)},
    \label{eq:prob_fail}
  \end{equation}
  where $c$ is treated as a parameter.

  On the other hand, the number of possibilities for input pairs $(W,T)$ satisfying $|W| = |T| \le 1.1 k$ is 
  \begin{equation}
    \sum_{j = 0}^{1.1 k} \binom{w}{j}\binom{n}{j} \leq 1.1k \cdot w^{1.1k}n^{1.1k} \leq n^{O(k)}.
    \label{eq:num_options}
  \end{equation}
  Combining \eqref{eq:prob_fail} and \eqref{eq:num_options}, the probability that there exists \emph{any} pair $(W, T)$ of size $1.1 k $ or smaller which fails to have its size reduced to $k$ or smaller is at most $n^{O(k) - c\Omega(k)}$. If $c$ is selected to be a sufficiently large constant, then it follows that $\psi$ performs $k$-halving with probability at least $1 - n^{-\Omega(k)}$. \end{proof}

We now prove Proposition \ref{prop:bound_failure_prob}.

\begin{proof}[Proof of Proposition \ref{prop:bound_failure_prob}] By Lemma \ref{lem:k-halving}, each algorithm $\mathcal{A}_i$ is $\left(w / (1.1)^i\right)$-halving with high probability in $n$. By a union bound, it follows that all of $\mathcal{A}_i \in \{\mathcal{A}_1, \ldots, \mathcal{A}_{\log_{1.1} w}\}$ are $\left(w / (1.1)^i\right)$-halving with high probability in $n$. If this occurs, then
 \[\mathcal{A} = \mathcal{A}_1 \circ \cdots \circ \mathcal{A}_{\log_{1.1} w}\]
 is fully-assigning, as desired.
\end{proof}

\subsection{Bounding the switching cost}\label{sec:bounding_switching_cost}

Recall that two worker/task inputs $(W_1, T_1)$ and $(W_2, T_2)$ are said to be \defn{unit distance} if
    \[W_1\setminus W_2|+|W_2 \setminus W_1| + |T_1\setminus T_2|+|T_2 \setminus T_1| \le 2.\]
A partial-assignment algorithm $\psi$ is \defn{$s$-switching-cost bounded} if for all unit-distance pairs of worker/task inputs $(W_1, T_1)$ and $(W_2, T_2)$, the set of assignments made by $\psi(W_1, T_1)$ deterministically differs from the set of assignments made by $\psi(W_2, T_2)$ by at most $s$.

In this section, we prove the following proposition.
\begin{proposition} The multi-round balls-to-bins algorithm is $O(\log w \log n)$-switching-cost bounded.
  \label{prop:bounding_switching_cost}
\end{proposition}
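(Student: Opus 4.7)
The plan is to isolate a per-step lemma about a single $k$-bin hash and then compose it $O(\log w \log n)$ times. Concretely, I will show that any $k$-bin hash $\psi$ satisfies both of the following for every pair of unit-distance worker/task inputs $(W_1,T_1)$ and $(W_2,T_2)$:
\begin{description}
\item[Compatibility.] The worker/task outputs $(W_1', T_1')$ and $(W_2', T_2')$ are again unit-distance or identical.
\item[Constant per-step cost.] The sets of worker-task pairs assigned by $\psi(W_1, T_1)$ and by $\psi(W_2, T_2)$ have symmetric difference of size $O(1)$.
\end{description}
Both properties are to hold deterministically, for every realization of the underlying hash functions $h_1,h_2$.

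The proof of this step lemma will exploit the following structural fact: once $h_1,h_2$ are fixed, the matching produced in any bin $b$ depends only on which workers of the input hashed to $b$ and which tasks of the input hashed to $b$. Since $(W_1,T_1)$ and $(W_2,T_2)$ are unit-distance and satisfy $|W_i|=|T_i|$, their symmetric differences contribute at most one distinguished worker $\omega^*$ and/or one distinguished task $\tau^*$. Every bin other than $h_1(\omega^*)$ and $h_2(\tau^*)$ therefore receives the same worker and task set in both executions, so $\psi$ makes the same matching there and leaves the same residue. The analysis reduces to the (at most two) affected bins, in each of which a small case split---on whether the distinguished element was the minimum in its bin, whether the bin was active, and whether $h_1(\omega^*)=h_2(\tau^*)$---shows that both the number of changed assignments and the symmetric difference of the residual unassigned sets grow by at most one, giving the two properties with small explicit constants.

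Given the step lemma, the proposition will follow by a clean induction over the composition defining $\mathcal{A}$. Fixing adjacent task sets $T_1,T_2$, let $I_1^{(i,j)}, I_2^{(i,j)}$ denote the worker/task inputs supplied to $\mathcal{A}_{i,j}$ in the two executions. The initial pair $I_1^{(1,1)}, I_2^{(1,1)}$ is unit-distance because $T_1,T_2$ themselves are. By the Compatibility part of the step lemma, applied inductively across the $\log_{1.1} w \cdot c \log n = O(\log w \log n)$ partial-assignment algorithms in $\mathcal{A}$, every pair $I_1^{(i,j)}, I_2^{(i,j)}$ remains unit-distance or identical. The constant per-step bound then contributes $O(1)$ to the symmetric difference of the overall assignment at each of the $O(\log w \log n)$ steps, and these contributions are disjoint because each $\mathcal{A}_{i,j}$ only touches workers and tasks still unassigned when it runs. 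Summing yields the claimed $O(\log w \log n)$ switching cost for $\mathcal{A}$.

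The main obstacle will be the case analysis in the step lemma: one must verify that the minimum worker (respectively minimum task) of an affected bin cannot shift by more than one element when that bin loses or gains a single worker (respectively task), and handle the corner cases where a bin becomes active in one execution and inactive in the other, or where the two affected bins coincide ($h_1(\omega^*) = h_2(\tau^*)$). The remainder of the argument is just bookkeeping through the composition.
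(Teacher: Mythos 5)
Your approach is essentially the same as the paper's: isolate a per-step lemma showing each $k$-bin hash is (a) compatibility-preserving (the paper's ``composition-friendly'') and (b) $O(1)$-switching-cost bounded, then compose those two facts by induction across the $O(\log w \log n)$ sub-algorithms (the paper's Lemmas~\ref{lem:Aij_switching_cost}, \ref{lem:composition-friendly}, and \ref{lem:concatenation}). One caveat on the step lemma, though: your description of a unit-distance pair as contributing ``at most one distinguished worker $\omega^*$ and/or one distinguished task $\tau^*$'' is incomplete. Under the constraint $|W_i| = |T_i|$, a unit-distance pair can also take the form $W_2 = W_1 \setminus \{\omega_1\} \cup \{\omega_2\}$ with $T_1 = T_2$ (two distinguished workers, no distinguished task) or its task analogue, in which case the two affected bins are $h_1(\omega_1)$ and $h_1(\omega_2)$ rather than one worker bin and one task bin. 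Your localize-to-affected-bins argument still goes through with the same constants, but the case analysis must include these ``swap'' shapes. The paper sidesteps this entirely by temporarily extending $\psi$ to inputs with $|W| \neq |T|$, proving the contraction $d(O_1, O_2) \le d(I_1, I_2)$ for a single added element, and chaining two such steps; this is a slightly cleaner bookkeeping device you may wish to adopt. Your justification for summing per-step costs (``contributions are disjoint'') is also stated loosely---the assignment sets are disjoint within a single run but not across the two runs---but the desired bound follows from the elementary fact that $|(A \cup B) \triangle (C \cup D)| \le |A \triangle C| + |B \triangle D|$, which is what Lemma~\ref{lem:concatenation} implicitly uses.
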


We begin by showing that each of the algorithms $\mathcal{A}_{i, j}$ are $O(1)$-switching-cost bounded.

\begin{lemma} For any $k$, the $k$-bin hash algorithm is $O(1)$-switching-cost bounded.
  \label{lem:Aij_switching_cost}
\end{lemma}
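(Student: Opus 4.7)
The plan is to case-split on the shape of the unit-distance pair $(W_1,T_1),(W_2,T_2)$ and then argue that in each case only a constant number of bins are affected.

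First I would enumerate the possible unit-distance configurations. Recall that $|W_i|=|T_i|$ for a valid input. Writing $\delta_W=|W_1\setminus W_2|+|W_2\setminus W_1|$ and $\delta_T=|T_1\setminus T_2|+|T_2\setminus T_1|$, we have $\delta_W+\delta_T\le 2$, and the constraint $|W_1|-|W_2|=|T_1|-|T_2|$ forces $(\delta_W,\delta_T)\in\{(0,0),(2,0),(0,2),(1,1)\}$. The $(0,0)$ case is trivial. The $(2,0)$ case is a pure worker-swap: $W_2=(W_1\setminus\{\omega\})\cup\{\omega'\}$, $T_1=T_2$. The $(0,2)$ case is the symmetric task-swap. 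The $(1,1)$ case must be a simultaneous addition (or deletion) of one worker and one task, i.e.\ $W_2=W_1\cup\{\omega\}$ and $T_2=T_1\cup\{\tau\}$ (up to swapping indices).

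Next I would analyze what a $k$-bin hash does bin by bin. Fix the hash functions $h_1,h_2$. The assignment made in bin $\kappa$ depends only on the multiset of workers in $W\cap h_1^{-1}(\kappa)$ and the multiset of tasks in $T\cap h_2^{-1}(\kappa)$: if either is empty, no pair is produced; otherwise the unique produced pair is (min worker, min task). Thus, if the worker-set and task-set of bin $\kappa$ are identical under $(W_1,T_1)$ and $(W_2,T_2)$, then $\psi(W_1,T_1)$ and $\psi(W_2,T_2)$ make the same (or no) assignment in bin $\kappa$.

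Now I would identify the affected bins in each case. In a worker-swap, only the bins $h_1(\omega)$ and $h_1(\omega')$ see any change; in a task-swap, only $h_2(\tau)$ and $h_2(\tau')$; in the $(1,1)$ add case, only $h_1(\omega)$ and $h_2(\tau)$. So in every case at most two bins are affected. In each affected bin, the assignment (a single pair or nothing) is different between the two scenarios, so the number of worker-task pairs in the symmetric difference contributed by that bin is at most $2$ (one pair removed, one pair added, which in turn accounts for at most $2$ workers whose task-assignment changes). Summing over the at most two affected bins gives a total switching cost of at most $4=O(1)$, proving the lemma.

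The argument is essentially bookkeeping, so there is no real obstacle; the only thing one has to be careful with is ruling out the ``unbalanced'' possibilities like $(\delta_W,\delta_T)=(2,0)$ with $|W_1|\ne|W_2|$, which is why I would first use the size constraint $|W_i|=|T_i|$ to reduce the case analysis to the four listed above.
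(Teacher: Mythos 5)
Your proof is correct and takes essentially the same approach as the paper's: both rely on the observation that the $k$-bin hash makes decisions bin-locally, so a unit-distance change in $(W,T)$ perturbs the data in at most two bins and hence at most a constant number of worker-task pairs. You simply make explicit the case analysis and the per-bin accounting that the paper's three-sentence proof treats as obvious.
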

\begin{proof} Let $\psi$ denote the $k$-bin hash algorithm. Consider unit-distance pairs of worker/task inputs $(W_1, T_1)$ and $(W_2, T_2)$. Changing $W_1$ to $W_2$ can change the assignments made by $\psi$ for at most a constant number of bins. Similarly changing $T_1$ to $T_2$ can change the assignments made by $\psi$ for at most a constant number of bins. Thus $\psi(W_1, T_1)$ differs from $\psi(W_2, T_2)$ by at most $O(1)$ assignments.
\end{proof}

Recall that $\mathcal{A}$ is the composition of the $O(\log w \log n)$ partial-assignment algorithms $\mathcal{A}_{i, j}$'s. The fact that each $\mathcal{A}_{i, j}$ is $O(1)$-switching-cost bounded does not directly imply that $\mathcal{A}$ is $O(\log w \log n)$-switching-cost bounded, however, because switching cost does not necessarily interact well with composition. In order to analyze $\mathcal{A}$, we show that each $\mathcal{A}_{i, j}$ satisfies an additional property that we call being composition-friendly.

A partial-assignment algorithm $\psi$ is \defn{composition-friendly}, if for all unit-distance pairs of worker/task inputs $(W_1, T_1)$ and $(W_2, T_2)$, the corresponding worker/task outputs $(W_1', T_1')$ and $(W_2', T_2')$ are also unit-distance.

Lemma \ref{lem:composition-friendly} shows that each $\mathcal{A}_{i, j}$ is composition-friendly.
\begin{lemma} For any $k$, the $k$-bin hash is composition-friendly.
  \label{lem:composition-friendly}
\end{lemma}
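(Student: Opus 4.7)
The plan is to fix the hash functions $h_1,h_2$ defining the $k$-bin hash $\psi$ and perform a case analysis on how $(W_1,T_1)$ and $(W_2,T_2)$ differ. The crucial structural observation is that $\psi$'s matching in any given bin depends only on the workers and tasks hashed into that particular bin; hence any bin receiving the exact same worker-set and task-set under both executions of $\psi$ contributes identically to its unmatched remainders. Only bins containing a modified element can contribute to the quantity $|W_1'\setminus W_2'|+|W_2'\setminus W_1'|+|T_1'\setminus T_2'|+|T_2'\setminus T_1'|$ that we must bound by $2$.

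Since $|W_i|=|T_i|$, unit-distance inputs fall (up to symmetry) into one of three cases: (a) $W_2=W_1\cup\{\omega\}$ and $T_2=T_1\cup\{\tau\}$; (b) one worker is swapped while the tasks agree; or (c) one task is swapped while the workers agree. In each case, only the at-most-two affected bins $b_1=h_1(\cdot)$ and $b_2=h_2(\cdot)$ (housing the modified worker and task) can contribute. For each affected bin I would explicitly enumerate how the ``smallest-worker-to-smallest-task'' rule changes. For instance, in case (b) with $b_1\neq b_2$, bin $b_1$ loses one worker from its roster, so either the matched worker is unchanged (and the set of unmatched workers in $b_1$ differs by the removed element alone) or the matched worker shifts to the next-smallest (and the set of unmatched workers differs only by that next-smallest element). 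Either way bin $b_1$ contributes $1$ to the symmetric difference; a symmetric argument for $b_2$ yields total contribution at most $2$. Cases (a) and (c) with $b_1\neq b_2$ are handled in the same spirit.

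The main obstacle is the degenerate subcase when $b_1=b_2$, so a single bin receives both modifications simultaneously. A single bin can now contribute up to $2$ to the output symmetric difference---for example in case (a) when $\omega$ is smaller than every existing worker in the bin (displacing the minimum worker) while $\tau$ exceeds the existing minimum task (so the new task sits in the unmatched remainder), both a displaced worker and the element $\tau$ appear in the symmetric difference. The saving grace is that no other bin is affected, so the global bound of $2$ still holds. A short subcase analysis based on whether $\omega$ (respectively $\tau$) is smaller than the current minimum in its bin, together with whether the bin had any workers or tasks to begin with, verifies the bound of $2$ in each configuration. Combining all cases establishes that unit-distance inputs always produce unit-distance outputs, proving that $\psi$ is composition-friendly.
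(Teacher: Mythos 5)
Your argument is correct, and it rests on the same core observation as the paper's proof (the bin-wise matching rule is local, so only the bins receiving a modified worker or task can perturb the output), but the decomposition is genuinely different. The paper proves the stronger, cleaner statement that the difference-score $d(O_1,O_2)\le d(I_1,I_2)$ by extending the $k$-bin hash to unbalanced inputs with $|W|\neq|T|$ and reducing everything to a single elementary move --- inserting one worker (or, symmetrically, one task) --- which it shows changes the output difference-score by exactly $1$ in each of three subcases; a unit-distance pair is then two such moves. You instead keep the inputs balanced and case-split directly on the three realizable unit-distance configurations (worker swap, task swap, simultaneous worker-and-task addition), bounding the per-bin contribution and treating the collision $b_1=b_2$ separately. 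Your route avoids the mild abuse of notation of running $\psi$ on unbalanced inputs, at the cost of more cases, including the same-bin degeneracies that the paper's telescoping argument absorbs for free. One small caution if you write the enumeration out in full: when a bin \emph{loses} its only worker and contains tasks, the contribution of that bin shows up on the \emph{task} side (the formerly matched task becomes unmatched), a subcase not covered by your two listed alternatives for bin $b_1$; it still contributes only $1$ to the total, so the bound survives, but it must appear in the case analysis. With that subcase (and its mirror for tasks) included, your proof is complete.
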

\begin{proof} 
  Although the algorithm $\psi$ is formally only defined on input $(W, T)$ for which $|W| = |T|$, we will abuse notation here and consider $\psi$ even on worker/task input $(W, T)$ satisfying $|W| \neq |T|$.\footnote{Indeed, the definition of the $k$-bin hash does not require a worker-task input with $|W| = |T|$. The only reason we require this equality in general is to simplify calculations, as in practice the algorithm will only be run on worker-task inputs of equal size.} Define the \defn{difference-score} of a pair of worker/task inputs  $I_1 = (W_1, T_1), I_2 = (W_2, T_2)$ to be the quantity
  \[d(I_1, I_2) = |W_1\setminus W_2|+|W_2 \setminus W_1| + |T_1\setminus T_2|+|T_2 \setminus T_1| .\] 
 
  We will show the stronger statement that the difference-score $d(O_1, O_2)$ of the corresponding worker/task outputs  $O_1 = (W_1', T_1'), O_2 = (W_2', T_2')$ satisfies
  \begin{equation}
  d(O_1, O_2) \le d(I_1, I_2).
    \label{eq:difference_score}
 \end{equation} 
  It suffices to consider only two special cases: the case in which $W_2 = W_1 \cup \{\omega\}$ for some worker $\omega$ and $T_2 = T_1$; and the case in which $T_2 = T_1 \cup \{\tau\}$ for some task $\tau$ and $W_2 = W_1$.
  Iteratively applying these two cases to transform $I_1$ into $I_2$ implies inequality \ref{eq:difference_score}. 
 
  For this purpose, the roles of $W$ and $T$ are identical, so suppose without loss of generality that $W_2 = W_1 \cup \{\omega\}$ for some worker $\omega$ and $T_2 = T_1$. Recall that the assignment of workers and tasks to buckets is determined by some hash functions $h_1,h_2$ and in particular is the same whether we input $W_1$ or $W_2$. We first assign (only) the elements of $W_1$ and $T_1$ to their respective buckets, and then look at how including the assignment of $\omega$ changes the worker-task output. If $h_1$ assigns $\omega$ to either a bin with no tasks or a bin which already has some lexicographically smaller worker, then we will have $W_2' = W_1' \cup \{w\}$ and $T_2' = T_1'$. If $h_1$ assigns worker $\omega$ to a bin with no other workers and at least one task, we let the smallest such task be $\tau$ and see  $W_2' = W_1'$ and $T_2' = T_1' \setminus \{\tau\}$. Finally, if $h_1$ assigns $\omega$ to a bin with only larger workers and at least one task, we let the minimal such worker be $\gamma$, and we see $W_2' = W_1' \cup \{\gamma\}$ and $T_2' = T_1'$. In all three cases, $d(O_1, O_2) = 1$, as desired.
\end{proof}

Next, we will show that composing composition-friendly algorithms has the effect of summing switching costs. 
\begin{lemma} Suppose that partial-assignment algorithms $\psi_1, \psi_2, \ldots, \psi_k$ are all composition-friendly, and that each $\psi_i$ is $s_i$-switching-cost bounded. Then $\psi_1 \circ \psi_2 \circ \cdots \circ \psi_k$ is composition-friendly and is $\left(\sum_i s_i\right)$-switching-cost-bounded.
\label{lem:concatenation}
\end{lemma}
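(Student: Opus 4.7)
The plan is to prove the lemma by induction on $k$. The base case $k=1$ is immediate from the hypothesis. For the inductive step, it suffices to show that composing two composition-friendly algorithms $\psi$ (switching cost $s$) and $\psi'$ (switching cost $s'$) yields a composition-friendly algorithm $\psi' \circ \psi$ with switching cost at most $s + s'$; then iterating gives the claimed result.

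For composition-friendliness of $\psi' \circ \psi$, I would take any unit-distance pair of worker/task inputs $(W_1, T_1)$ and $(W_2, T_2)$, apply $\psi$ to each, and invoke composition-friendliness of $\psi$ to conclude that the resulting worker/task outputs $(W_1'', T_1'')$ and $(W_2'', T_2'')$ are again unit-distance. These then serve as inputs to $\psi'$; a second application of composition-friendliness (this time of $\psi'$) shows that the final outputs are also unit-distance, which is exactly composition-friendliness of $\psi' \circ \psi$.

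For the switching-cost bound, the key observation is that the set of worker-task assignments produced by $\psi' \circ \psi$ on input $(W_j, T_j)$ is the disjoint union of (i) the assignments produced by $\psi$ on $(W_j, T_j)$ and (ii) the assignments produced by $\psi'$ on the worker/task output of $\psi$. Since $\psi$ is $s$-switching-cost bounded, the symmetric difference of the assignment sets in (i) across $j\in\{1,2\}$ has size at most $s$. Since $\psi$ is composition-friendly, its outputs on the two inputs are unit-distance, so $\psi'$ being $s'$-switching-cost bounded implies the symmetric difference of the assignment sets in (ii) has size at most $s'$. The symmetric difference of the full assignment sets is contained in the union of these two symmetric differences, giving a bound of $s + s'$ as desired.

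The main subtlety, and the only non-routine step, is verifying that the overall switching cost is truly additive rather than multiplicative under composition. This hinges on the disjoint-union structure of the assignments produced by a composition together with the fact that composition-friendliness preserves the unit-distance relation at every intermediate stage; without the latter, the bound on $\psi'$ would not apply. Once that structural point is in hand, iterating $k-1$ times completes the induction and yields both composition-friendliness and the sum $\sum_{i=1}^{k} s_i$ for the switching cost.
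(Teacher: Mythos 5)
Your proof is correct and follows essentially the same route as the paper's: reduce to the case of two algorithms, use composition-friendliness of the first algorithm to see that the intermediate inputs remain unit-distance, and then add the two switching-cost bounds. Your explicit remark that the composed assignment set is the disjoint union of the two stages' assignments (so symmetric differences add) is exactly the structural point the paper's terser argument relies on.
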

\begin{proof} By induction, it suffices to prove the lemma for $k = 2$. Let $I_1 = (W_1, T_1)$ and $I_2 = (W_2, T_2)$ be unit-distance worker/task inputs.

  For $i \in \{1, 2\}$, let $I_i'= (W_i', T_i')$ be the worker/task output for $\psi_1(W_i, T_i)$, and let $I_i'' = (W_i'', T_i'')$ be the worker/task output for $\psi_2(W_i', T_i')$.

  Since $\psi_1$ is composition friendly, its outputs $I_1'$ and $I_2'$ are unit distance. Since $I_1'$ and $I_2'$ are unit distance, and since $\psi_2$ is composition friendly, the outputs $I_1''$ and $I_2''$ of $\psi_2$ are also unit distance. Thus $\psi_1 \circ \psi_2$ is composition friendly.

  Since the inputs $I_1$ and $I_2$ to $\psi_1$ are unit-distance, $\psi_1(I_1)$ and $\psi_1(I_2)$ differ in at most $s_1$ worker-task assignments. Since the inputs $I_1'$ and $I_2'$ to $\psi_2$ are also unit distance, $\psi_2(I_1')$ and $\psi_2(I_2')$ differ in at most $s_2$ worker-task assignments. Thus the composition $\psi_1 \circ \psi_2$ is $(s_1 + s_2)$-switching-cost bounded, as desired.
\end{proof}

We can now prove Proposition \ref{prop:bounding_switching_cost}.

\begin{proof}[Proof of Proposition \ref{prop:bounding_switching_cost}] By Lemma \ref{lem:Aij_switching_cost}, each $\mathcal{A}_{i, j}$ is $O(1)$-switching-cost bounded. By Lemma \ref{lem:composition-friendly}, each $\mathcal{A}_{i, j}$ is composition friendly. Since $\mathcal{A}$ is the composition of the $O(\log w \log n)$ different $\mathcal{A}_{i, j}$'s, it follows by Lemma \ref{lem:concatenation} that $\mathcal{A}$ is $O(\log w \log n)$-switching-cost bounded.
\end{proof}

\section{Derandomizing the construction}\label{sec:derandomizing}

In this section, we derandomize the multi-round balls-to-bins algorithm to prove the following theorem.

\thmUBcon*
To this end we use pseudorandom objects called \emph{strong dispersers}. 
Intuitively, a disperser is a function such that the image of any not-too-small subset of its large domain (e.g.,~workers or tasks) is a dense subset of its small co-domain (e.g.,~bins). Since this requirement is hard to satisfy directly,  dispersers are defined with a second argument, called the seed. For a strong disperser, the density requirement is satisfied only in expectation over the seed. The standard way to define strong dispersers (Definition~\ref{defn:strong} below) is in the language of random variables. We follow with an equivalent alternative Definition~\ref{defn:alt}, more convenient for our purposes.

\begin{definition}[Strong dispersers]\label{defn:strong}
For $k\in\NN$, $\epsilon\in\mathbb{R}_+$, a $(k, \epsilon)$-strong disperser is a function $Disp : \{0, 1\}^n \times \{0,1\}^d \to \{0,1\}^m$ such that for any random variable $X$ over $\{0,1\}^n$ with min-entropy at least $k$ we have 
\[|\operatorname{Supp}((Disp(X, U_d), U_d))| \geq (1 - \epsilon) \cdot 2^{m+d}.\]
\end{definition} 

Here $\operatorname{Supp}$ denotes the support of a random variable, $U_d$ denotes the uniform distribution on $\{0,1\}^d$, and the min-entropy of a random variable $X$ is defined as $\min_{x}(-\log_2(\Pr[X=x]))$. We will use a simple fact that any distribution which is uniform on a $2^k$-element subset of the universe and assigns zero probability elsewhere (called \emph{flat $k$-source} in pseudorandomness literature) has min-entropy $k$. Interestingly, every distribution with min-entropy at least $k$ is a convex combination of such distributions (see, e.g.,~Lemma~6.10 in~\cite{vadhan2012}, first proved in~\cite{chor1988}), which makes the following definition equivalent.

\begin{definition}[Strong dispersers, alternative definition]\label{defn:alt}
For $k\in\NN$, $\epsilon\in\mathbb{R}_+$, a $(k, \epsilon)$-strong disperser is a function $Disp : [N] \times [D] \to [M]$ such that for any subset $S \subseteq [N]$ of size $|S| \ge 2^k$ we have 
\[|\{(Disp(s,d),d) : s \in S, d \in [D]  \}| \geq (1 - \epsilon) \cdot M \cdot D.\]
\end{definition}

We use efficient explicit strong dispersers constructed by Meka, Reingold and Zhou~\cite{Meka14}.

\begin{theorem}[Theorem 6 in \cite{Meka14}]
\label{thm:disp}
For all $N=2^n$, $k\in\NN$, and $\epsilon\in\mathbb{R}_+$, there exists an explicit $(k, \epsilon)$-strong disperser
$Disp : [N] \times [D] \to [M]$ with $D=2^{O(\log n)}=\polylog N$ and $M=2^{k - 3\log n - O(1)}=2^k \cdot \Omega(1/\log^3 N)$.
\end{theorem}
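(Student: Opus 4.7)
The plan is to establish this theorem by an explicit construction following the condenser-then-extractor paradigm. I would first record a probabilistic-method sanity check: a uniformly random function $f : [N] \times [D] \to [M]$ with $D = \polylog N$ and $M = 2^{k - 3\log n - O(1)}$ satisfies the strong-disperser property with probability close to $1$. This is verified using Definition~\ref{defn:alt}: fix a flat source $S \subseteq [N]$ with $|S| = 2^k$, use Chernoff plus linearity of expectation to bound the expected number of covered pairs $(y, d) \in [M] \times [D]$, and then take a union bound over $\binom{N}{2^k} \le 2^{n \cdot 2^k}$ flat sources (which is dominated by the $2^{-\Omega(2^k M D)}$ concentration from Chernoff once $M \le 2^{k - 3 \log n - O(1)}$). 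This tells us the target parameters are information-theoretically achievable and guides the explicit construction.

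To make the construction explicit, I would compose two known pseudorandom primitives. First, apply an explicit lossless condenser $C : [N] \times [D_1] \to [N']$ of the Guruswami--Umans--Vadhan style, built via list-decoding of Parvaresh--Vardy codes, with $\log D_1 = O(\log n)$ and $N' = \poly(k/\epsilon)$; on any flat source of min-entropy $k$ it produces a distribution $\epsilon$-close to a flat source of min-entropy $k$ on the small universe $[N']$. Then apply an explicit strong $(k, \epsilon)$-extractor $E : [N'] \times [D_2] \to [M]$ on the condensed source; since $N'$ is only polynomial in $k/\epsilon$, standard constructions achieve $\log D_2 = O(\log N' + \log(1/\epsilon)) = O(\log n)$ with entropy loss $O(\log n)$. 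Define
\[
  Disp(x, (d_1, d_2)) \;=\; E\bigl(C(x, d_1),\, d_2\bigr),
\]
so that $\log D = \log D_1 + \log D_2 = O(\log n)$ and the output length matches $M = 2^{k - 3 \log n - O(1)}$ after bookkeeping the additive losses of each stage.

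To verify the disperser property, I would use Definition~\ref{defn:alt} and reduce to checking flat sources $S$ with $|S| = 2^k$. The condenser guarantees that the joint distribution $(d_1, C(U_S, d_1))$ is $\epsilon/2$-close to a distribution which, conditioned on each $d_1$, is a flat source of min-entropy $k$ on $[N']$. Applying the strong extractor to this high-entropy source covers all but an $\epsilon/2$-fraction of $[M] \times [D_2]$ in expectation over $d_2$, and the $\epsilon/2 + \epsilon/2 = \epsilon$ budget yields the claimed disperser bound on the size of $\{(Disp(s, d), d) : s \in S, d \in [D]\}$.

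The main obstacle I anticipate is the careful accounting of seed length and entropy loss across the composition: both $D_1$ and $D_2$ must stay $\polylog N$ simultaneously while the entropy loss stays $3 \log n + O(1)$, and the $\log(1/\epsilon)$ dependencies accumulated at each stage must be absorbed into the $O(1)$ term without inflating the seed. A secondary subtlety is that the condenser's output is only $\epsilon$-close to a flat source rather than actually flat, so I would invoke the convex-combination characterization of min-entropy distributions to argue that the error analysis carries through by linearity, exactly as in the equivalence between Definitions~\ref{defn:strong} and~\ref{defn:alt}.
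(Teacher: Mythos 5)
First, a point of reference: the paper does not prove Theorem~\ref{thm:disp} at all --- it is quoted verbatim from \cite{Meka14} and used as a black box --- so any self-contained argument is necessarily a different route from "the paper's proof." Your condense-then-extract skeleton is a reasonable way to attempt a reproof, but two load-bearing steps are wrong as written. The first is the claim that a lossless condenser maps into a universe $[N']$ with $N' = \poly(k/\epsilon)$. Any condenser that (even approximately) preserves min-entropy $k$ must have $N' \ge 2^{k-O(1)}$, so $N'$ cannot be polynomial in $k/\epsilon$ unless $k = O(\log(k/\epsilon))$. The correct Guruswami--Umans--Vadhan guarantee is $\log N' = k + O(\log(nk/\epsilon))$: the output universe is still exponential in $k$, and what you actually gain is that the condensed source has entropy \emph{deficiency} only $O(\log(nk/\epsilon))$. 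Once $N'$ is corrected, your second stage collapses: the formula $\log D_2 = O(\log N' + \log(1/\epsilon))$ is not the seed length of any standard extractor (seed length scales with the logarithm of the input \emph{length}, i.e., $O(\log\log N')$, not with $\log N'$), and the assertion that "standard constructions" simultaneously achieve seed length $O(\log n)$ and entropy loss $O(\log n)$ is precisely the statement that needs proof: extractors with seed $O(\log n)$ of GUV type lose a constant fraction of $k$, while the constructions with entropy loss $O(\log n)$ pay for it in seed length. The fact that would rescue the argument --- that the condensed source has deficiency $\Delta = O(\log(nk/\epsilon))$, and that sources of deficiency $\Delta$ admit strong extractors with seed and entropy loss $O(\Delta + \log(1/\epsilon))$ --- never appears in your write-up.

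Second, even after these repairs the argument can only work for $\epsilon \ge 1/\poly(n)$ (which does suffice for this paper, where $\epsilon = 1/4$), not for the theorem as stated with bounds on $D$ and $M$ independent of $\epsilon$. Your composition of a condenser with a strong extractor is itself a $(k,\epsilon)$-strong extractor, and the Radhakrishnan--Ta-Shma lower bounds force any such object to have entropy loss $k + d - m \ge 2\log(1/\epsilon) - O(1)$. Hence the $\log(1/\epsilon)$ terms you hope to "absorb into the $O(1)$" provably cannot be absorbed once $\epsilon$ is subpolynomially small: the claimed $M = 2^{k - 3\log n - O(1)}$ is then unattainable by any extractor-based construction. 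Circumventing this barrier --- obtaining entropy loss governed by $\log n$ and $\log\log(1/\epsilon)$ rather than $\log(1/\epsilon)$ --- is exactly the disperser-specific content of \cite{Meka14}, and it is the one idea your proposal does not supply.
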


\paragraph{Designing the algorithm}
We begin with applying Lemma~\ref{lem:multiset_reduction} in order to be able to restrict our attention to task sets (rather than multisets), at the expense of increasing the number of tasks from $t$ to $wt$. For convenience, we round up the new number of tasks to the closest power of two $N=2^{\lceil \log wt \rceil}$.

Our explicit algorithm $\mathcal{E}$ has the same structure as the randomized algorithm $\mathcal{A}$, i.e.~it is the composition of $\log w$ partial assignment algorithms 
\[\mathcal{E} = \mathcal{E}_1 \circ \mathcal{E}_2 \circ \cdots \circ \mathcal{E}_{\log w}.\]
Each $\mathcal{E}_i$ is responsible for bringing down the number of unassigned workers to the next power of two, and is composed of a number of explicit sub-algorithms $\mathcal{E}_{i,j}$'s. Contrary to $\mathcal{A}_{i,j}$'s, sub-algorithms $\mathcal{E}_{i,j}$'s are not identical copies for a fixed $i$. However, the chain of distinct sub-algorithms has to be copied $O(\log^3 N)$ times. We reflect this introducing the $\widehat{\mathcal{E}_i}$ notation:
\[\mathcal{E}_i = \underbrace{\widehat{\mathcal{E}}_i \circ \widehat{\mathcal{E}_i} \circ \cdots \circ \widehat{\mathcal{E}_i}}_{O(\log^3 N)\text{ times}},
\quad\text{where}\quad
\widehat{\mathcal{E}_i} = \mathcal{E}_{i,1} \circ \mathcal{E}_{i,2} \circ \cdots \circ \mathcal{E}_{i,\polylog N}.\]

The key difference between the randomized and explicit algorithm is that $\mathcal{E}_{i,j}$'s, instead of using random hash functions $h_1, h_2$, use explicit functions obtained from strong dispersers. Another notable difference is that $\mathcal{A}_{i,j}$'s use $k$ bins to deal with input sets of size in $[k, 1.1k]$, while $\mathcal{E}_{i,j}$'s have to use polylogarithmically less bins, limiting the number of worker-task pairs that can be assigned by a single sub-algorithm and, as a consequence, forcing us to compose a larger number of sub-algorithms.

Let us fix $i \in [\log w]$, and denote $k_i = \lceil\log w\rceil - i$. Let $Disp_i : [N] \times [D_i] \to [M_i]$ be the $(k_i, 1/4)$-strong disperser given by Theorem~\ref{thm:disp}. Recall that $D_i = \polylog N$, $M_i = 2^{k_i} \cdot \Omega(1/\log^3 N)$, and $N$ is large enough so that all workers and all tasks are elements of $[N]$. We will have $\widehat{\mathcal{E}_i} = \mathcal{E}_{i,1} \circ \mathcal{E}_{i,2} \circ \cdots \circ \mathcal{E}_{i,D_i}$.
 For each $j \in [D_i]$, sub-algorithm $\mathcal{E}_{i,j}$ assigns workers and tasks to $M_i$ bins. Each worker $\omega \in W$ is assigned to bin $Disp_i(\omega, j)$, and each task $\tau \in T$ is assigned to bin $Disp_i(\omega, j)$.
 Then, like in the randomized strategy, for each active bin (i.e.~one which was assigned nonempty sets of workers and tasks) the smallest worker and the smallest task in that bin get assigned to each other.

\paragraph{Analyzing switching cost}
In Section~\ref{sec:bounding_switching_cost}, where we analyze the switching cost of randomized multi-round balls-to-bins algorithm, we do not exploit the fact that the hash functions $h_1$, $h_2$ are random. Actually, as we already remark, our switching cost bound is deterministic and thus works for any choice of functions $h_1$, $h_2$. Therefore the same analysis works for the explicit algorithm.  Namely, each sub-algorithm $\mathcal{E}_{i,j}$ is $O(1)$-switching cost bounded and composition-friendly (Lemmas~\ref{lem:Aij_switching_cost} and~\ref{lem:composition-friendly} generalize trivially), thus the switching cost of $\mathcal{E}$ depends only on the number of sub-algorithms, which is $\polylog N = \polylog wt$, as desired.

\paragraph{Proving the algorithm is fully-assigning} We begin by analyzing the number of worker/task assignments made by $\widehat{\mathcal{E}_i} = \mathcal{E}_{i,1} \circ \cdots \circ \mathcal{E}_{i,D_i}$.

\begin{lemma}
\label{lem:Ehat}
Let $(W,T)$ be a worker/task input satisfying $|W|=|T| \ge 2^{k_i}$.
Then $\widehat{\mathcal{E}_i}(W,T)$ makes at least $M_i/4$ worker/task assignments.
\end{lemma}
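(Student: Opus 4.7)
The plan is to apply the strong disperser property to the original sets $W$ and $T$ (not to the shrinking unassigned sets $W_j, T_j$) and then run an accounting argument that trades worker-task assignments against ``bin-emptying'' events.

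First I would apply Definition~\ref{defn:alt} of the $(k_i,1/4)$-strong disperser to $W \subseteq [N]$, which has size at least $2^{k_i}$: this gives that the set
\[ P_W := \{(b,j) \in [M_i] \times [D_i] : \exists\, \omega \in W,\ Disp_i(\omega,j) = b\} \]
has size at least $(3/4)\, M_i D_i$, and the analogous set $P_T$ for $T$ satisfies the same bound. By inclusion-exclusion inside $[M_i] \times [D_i]$, the intersection $P := P_W \cap P_T$ satisfies $|P| \ge (1/2)\, M_i D_i$.

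Let $A = \sum_k a_k$ be the total number of assignments produced by $\widehat{\mathcal{E}_i}$, where $a_k$ is the number produced by $\mathcal{E}_{i,k}$. I would then classify each $(b,j) \in P$ according to what happens in $\mathcal{E}_{i,j}$. Either (a) bin $b$ is \emph{active} in $\mathcal{E}_{i,j}$, meaning the current unassigned sets $W_j, T_j$ still each contain an element hashing to $b$ under seed $j$, in which case one assignment is made in bin $b$; or (b) every original $\omega \in W$ with $Disp_i(\omega,j) = b$ has already been assigned by $\mathcal{E}_{i,1}, \ldots, \mathcal{E}_{i,j-1}$; or (c) the analogous situation holds for tasks. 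Every pair in $P$ falls into at least one case, and case (a) contributes exactly $A$ total pairs across all $j$.

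For case (b), fix $j$ and let $W^*_j$ denote the workers assigned before $\mathcal{E}_{i,j}$, so $|W^*_j| = \sum_{k<j} a_k$. Since each $\omega$ hashes to a single bin under seed $j$, the number of case-(b) bins at seed $j$ is at most $|W^*_j|$. Summing yields $\sum_j |W^*_j| = \sum_k (D_i - k)\, a_k \le (D_i - 1)\, A$, and case (c) is symmetric. Hence $|P| \le A + 2(D_i - 1)\, A = (2D_i - 1)\, A$, giving $A \ge M_i D_i / (2(2D_i - 1)) > M_i / 4$. The main subtlety will be this bin-emptying accounting: a single removed worker can ``empty'' bins in many later seeds, but the factor of $(D_i - 1)$ captures that exactly, and the disperser's $3/4$ coverage leaves just enough slack to land at $M_i/4$.
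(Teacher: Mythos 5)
Your proof is correct, but it follows a genuinely different route from the paper's. The paper's argument first applies pigeonhole over the seeds $j \in [D_i]$ to find a \emph{single} seed with $|Disp_i(W,j) \cap Disp_i(T,j)| \ge M_i/2$, and then runs an accounting argument entirely within that one seed: it counts $c_a$ active bins, $c_w$ inactive bins whose workers were all assigned earlier, and $c_t$ inactive bins whose tasks were all assigned earlier, concluding that $\mathcal{E}_{i,1}\circ\cdots\circ\mathcal{E}_{i,j}$ makes at least $c_a + \max(c_w,c_t) \ge (c_a+c_w+c_t)/2 \ge M_i/4$ assignments. You instead skip the pigeonhole step and do a global double-count over all pairs $(b,j) \in P$: every such pair is either active (contributing one of the $A$ assignments) or ``emptied,'' and you amortize the emptying events against the assignments that caused them, bounding them by $2(D_i-1)A$ total. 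Both arguments are sound and deliver the same $M_i/4$ bound; the paper's pigeonhole version keeps the whole inequality confined to one seed, which makes the final counting of $c_a + \max(c_w, c_t)$ very clean, while your amortized version avoids having to commit to a single seed and spells out more explicitly why a worker, once assigned, can block at most $D_i - 1$ future bins. One small point worth stating in a polished write-up: you implicitly use that every active bin at any seed $j$ lies in $P$, which holds because $W_j \subseteq W$ and $T_j \subseteq T$, so an active bin is covered by both $P_W$ and $P_T$; making that one sentence explicit closes the argument cleanly.
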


\begin{proof}
By the definition of dispersers, the two images
\[\{(Disp_i(\omega,j),j) : \omega \in W, j \in [D_i] \}, \quad\text{and}\quad
  \{(Disp_i(\tau,  j),j) : \tau   \in T, j \in [D_i] \}\]
have size at least $(3/4) \cdot M_i \cdot D_i$. Since they are both subsets of $[M_i]\times[D_i]$, their intersection has size at least $(1/2) \cdot M_i \cdot D_i$. By the pigeonhole principle, there must exist $j \in D_i$ such that
\begin{equation}\label{eqn:disp}|Disp_i(W,j) \cap Disp_i(T,j)| \ge M_i / 2.\end{equation}
Let us fix such $j$, and look at the execution of $\mathcal{E}_{i,j}$. For each bin $b \in Disp_i(W,j) \cap Disp_i(T,j)$, if $b$ is not active, then all workers $\{\omega \in W \mid Disp_i(\omega,j)=b\}$ or all tasks \{$\tau \in T \mid Disp_i(\tau,j)=b\}$ must have been already assigned by $(\mathcal{E}_{i,1}\circ\cdots\circ\mathcal{E}_{i,j-1})(W, T)$. Thus, each bin in $Disp_i(W,j) \cap Disp_i(T,j)$ either is active -- and contributes one worker and one task to the assignment -- or is inactive and testifies that at least one worker or at least one task is assigned by earlier sub-algorithms.
Let $c_a$ denote the number of active bins, $c_w$~denote the number of inactive bins testifying for a worker assigned by earlier sub-algorithms, and $c_t$ denote the number of inactive bins testifying for a task. We have $c_a+c_w+c_t \ge M_i / 2$, by Inequality~(\ref{eqn:disp}). It follows that the number of worker/task assignments made by $(\mathcal{E}_{i,1}\circ\cdots\circ\mathcal{E}_{i,j})(W, T)$ is at least $c_a + \max(c_w, c_t) \ge c_a + \frac{1}{2}(c_w+c_t) \ge M_i / 4$, as desired.
\end{proof}

Recall that $M_i = 2^{k_i} \cdot \Omega(1/\log^3 N)$. Thus, Lemma~\ref{lem:Ehat} implies that each $\mathcal{E}_i$ -- which is a composition of $O(\log^3 N)$ copies of $\widehat{\mathcal{E}_i}$ -- when given a worker/task input of size at most $2 \cdot 2^{k_i}$ returns a worker/task output of size at most $2^{k_i}$. It follows that $\mathcal{E} = \mathcal{E}_1 \circ \mathcal{E}_2 \circ \cdots \circ \mathcal{E}_{\log w}$ is fully-assigning, which concludes the proof of Theorem~\ref{thm:UBcon}.

\section{Lower bounds on switching cost}

Define $s_{w,t}$ to be the optimal switching cost for assignment functions that assign workers $1, 2, \ldots, w$ to multisets of $w$ tasks from the universe $[t]$. The upper bounds in this paper establish that $s_{w, t} \le O(\log w \log (wt))$. It is natural to wonder whether smaller bounds can be achieved, and in particular, whether a small switching cost that depends only on $w$ can be achieved.

It trivially holds that $s_{w, t} \le w$. We show that when $t$ is sufficiently large relative to $w$, there is a matching lower bound of $s_{w, t} \ge w$.

\thmLB*

\begin{proof} Given any worker-task assignment function $\phi$, we can actually find high switching cost between a pair of task subsets, in which all demands are 0 or 1. For each $T\subseteq [t]$ of $w$ tasks, $\phi$ produces a bijection of workers $[w]$ to tasks $T$. In order to lower-bound the switching cost, we produce a coloring of the complete $w$-uniform hypergraph with $t$ vertices. The coloring will be designed so that, if it contains a monochromatic clique on $w + 1$ vertices, then the assignment function $\phi$ must have worst-possible switching cost $w$. By applying the hypergraph Ramsey theorem, we deduce that, if $t$ is large enough, then the coloring must contain a monochromatic $(w + 1)$-clique, completing the lower bound.

\paragraph{Coloring the complete $w$-uniform hypergraph on $t$ vertices}
Let $K_t^{(w)}$ denote the complete $w$-uniform hypergraph on $t$ vertices. Note that the hyperedges of $K_t^{(w)}$ are just the $w$-element subsets of $[t]$, which correspond to sets of tasks.

For a task set $T=\{\tau_1,\ldots,\tau_w\}$, where $1\leq \tau_1<\cdots<\tau_w\leq t$, we color the hyperedge $T$ with the tuple $\pi = \langle \pi(1), \pi(2), \ldots, \pi(w) \rangle$, where $\tau_{\pi(i)}$ is the task assigned to worker $i$. One can think of $\pi$ as a permutation of numbers $\{1,2,\ldots,w\}$, and thus the coloring consists of at most $w!$ colors.

\paragraph{Monochromatic $(w + 1)$-cliques imply high switching cost} The key property of the coloring $C$ is that, if $K_t^{(w)}$ contains a monochromatic $(w + 1)$-vertex clique (i.e., $K_{w + 1}^{(w)}$), then $\phi$ must have switching cost $w$.

Namely, if $K_t^{(w)}$ contains a monochromatic $(w + 1)$-clique, then we can find $w+1$ vertices, $\tau_1<\tau_2<\cdots<\tau_{w+1}$, such that every $w$-element subset $T$ of these tasks is assigned the same permutation $\pi$ as its color. In particular, this means that for the task-set $T_1 = \{\tau_1,\ldots,\tau_w\}$ each worker $i$ is assigned to task $\tau_{\pi(i)}$, but for the task-set $T_2 = \{\tau_2, \ldots, \tau_{w + 1}\}$ that same worker $i$ is assigned to a different task $\tau_{\pi(i) + 1}$. Thus there is a pair of adjacent task sets $T_1, T_2$ that exhibit switching cost $w$.

\subparagraph{Finding a monochromatic clique}
In order to complete the lower bound, we wish to show that, if $t$ is sufficiently large, then the coloring contains a monochromatic $K_{w + 1}^{(w)}$. To do this, we employ the hypergraph Ramsey theorem.

\begin{theorem}[{Theorem 1 in \cite{erdosrado1952combinatorial}}]
\label{thm:ramsey}
Let $k\geq 2$ and $N\geq n\geq 2$ be positive integers. The hypergraph Ramsey number $R(k,n,N)$ is defined to be the least positive integer $M$ such that for every $k$-coloring of the hyperedges of $K^{(n)}_M$, the complete $n$-uniform hypergraph on $M$ vertices, contains a monochromatic copy of $K^{(n)}_M$. This quantity satisfies  \[R(k,n,N)\leq k^{(k^{n-1})^{(k^{n-2})^{\cdots^{(k^2)^{k(N-n)+1}}}}}.\]
\end{theorem}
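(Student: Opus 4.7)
The plan is to prove Theorem~\ref{thm:ramsey} by induction on the uniformity $n$, following the classical Erdős--Rado argument. The base case $n=2$ is the ordinary graph Ramsey theorem, which I would prove by the familiar greedy vertex-elimination: iteratively pick a vertex, pass to the largest monochromatic neighborhood, record the dominant color, and after roughly $L=k(N-2)+1$ iterations pigeonhole on the recorded colors to find $N-1$ indices sharing a common color; one more surviving vertex completes the monochromatic $K_N$. A careful count shows $M\ge k^{k(N-2)+1}$ suffices, which matches the innermost layer of the stated tower.

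For the inductive step, given a $k$-coloring $\chi$ of the hyperedges of $K_M^{(n)}$, I would construct a sequence $v_1,v_2,\ldots,v_L$ together with a descending chain $[M]=V_0\supset V_1\supset\cdots\supset V_L$ with $v_i\in V_{i-1}$ and $V_i\subseteq V_{i-1}\setminus\{v_i\}$ satisfying the ``canonical'' property: for every $(n-1)$-subset $S\subseteq\{v_1,\ldots,v_i\}$, the color $\chi(S\cup\{x\})$ is independent of the choice $x\in V_i$. To maintain this at step $i$, I observe that each of the $\binom{i-1}{n-2}$ new $(n-1)$-subsets through $v_i$ partitions $V_{i-1}\setminus\{v_i\}$ into $k$ color classes; a joint pigeonhole lets me retain a single class of size at least $(|V_{i-1}|-1)/k^{\binom{i-1}{n-2}}$, which I take to be $V_i$.

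Having produced $L$ such vertices, I would define an induced $k$-coloring of $K_L^{(n-1)}$ on $\{v_1,\ldots,v_L\}$ by assigning to $\{v_{i_1},\ldots,v_{i_{n-1}}\}$ (with $i_1<\cdots<i_{n-1}$) the common value of $\chi(\{v_{i_1},\ldots,v_{i_{n-1}},v_j\})$ for any $j>i_{n-1}$; the canonical property guarantees this is well-defined. Applying the inductive hypothesis, if $L\ge R(k,n-1,N)$ then there is a monochromatic $K_N^{(n-1)}$ on some subset $\{v_{j_1},\ldots,v_{j_N}\}$; and because the canonical property makes each $n$-subset of this set inherit the same color from $\chi$, it lifts to a monochromatic $K_N^{(n)}$ in the original coloring.

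It then remains to bound $L$ in terms of $M$. Iterating $|V_i|\ge(|V_{i-1}|-1)/k^{\binom{i-1}{n-2}}$ and using the hockey-stick identity $\sum_{i=1}^{L}\binom{i-1}{n-2}=\binom{L}{n-1}$ yields the recursion $R(k,n,N)\le k^{\binom{R(k,n-1,N)}{n-1}}$. Unfolding this recursion is the main bookkeeping obstacle: one must track how each level contributes an exponentiation by roughly $(n-1)$, producing the descending bases $k^{n-1},k^{n-2},\ldots,k^2$ inside the tower, while verifying that the constants collapse into the innermost $k(N-n)+1$ coming from the base case. Matching the exact shape of the tower claimed by the theorem requires careful level-by-level accounting, but the overall structure is simply the standard Erdős--Rado iteration.
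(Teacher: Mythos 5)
The paper offers no proof of this statement---it is quoted verbatim from Erd\H{o}s--Rado---so the only question is whether your reconstruction of their argument is sound. The skeleton is: the greedy base case at $n=2$; the canonical-sequence construction with the joint pigeonhole $|V_i|\ge(|V_{i-1}|-1)/k^{\binom{i-1}{n-2}}$; the well-definedness of the induced $(n-1)$-uniform coloring; and the lift of a monochromatic set back to $\chi$. All of that is correct and is exactly the classical Erd\H{o}s--Rado stepping-down proof, and the hockey-stick computation giving an exponent of $\binom{L}{n-1}$ is right.

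The one step you have not closed is the final unfolding, and as you have set it up it does not produce the stated constants. Your inductive step demands $L\ge R(k,n-1,N)$ canonical vertices, i.e.\ the recursion $R(k,n,N)\le k^{\binom{R(k,n-1,N)}{n-1}}$ with the target size $N$ held fixed while the uniformity drops. Iterating this down to uniformity $2$ bottoms out at $R(k,2,N)\le k^{k(N-2)+1}$, so the innermost exponent of the resulting tower is $k(N-2)+1$, not the claimed $k(N-n)+1$: the bases $k^{n-1},k^{n-2},\dots,k^2$ come out correctly (since $\binom{L}{n-1}\le L^{n-1}$ and $(k^{E})^{n-1}=(k^{n-1})^{E}$), but the constant at the top does not ``collapse'' as you hope. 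To recover the exact bound you need the sharper observation that a monochromatic $K_{N-1}^{(n-1)}$ in the induced coloring already suffices: every $n$-subset of an $N$-element set has its first $n-1$ elements among the first $N-1$ vertices, so one appends a single extra vertex (any later $v_j$, or any survivor of $V_L$) to complete a monochromatic $K_N^{(n)}$. This yields a recursion in terms of $R(k,n-1,N-1)$, decrementing $N$ in lockstep with $n$, after which the innermost exponent telescopes to $k(N-n)+1$. The discrepancy is immaterial for this paper, which only needs $R(w!,w,w+1)\le\tow(O(w))$, and your argument already delivers a tower of the correct height; but as a proof of the theorem as stated, the recursion must be repaired as above.
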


Applying Theorem \ref{thm:ramsey}, we see that if $t\geq R(w!,w,w+1)$, then the $(w!)$-coloring of $K_t^{(w)}$ contains a monochromatic $(w + 1)$-clique, and the assignment function $\phi$ must have switching cost $w$, as desired. By Theorem \ref{thm:ramsey}, $R(w!,w,w+1)\leq\tow(O(w))$.
which implies that that every worker-task assignment function has switching cost $w$ when $t\geq\tow(\Omega(w))$.
This completes the proof of Theorem \ref{thm:LB}.
\end{proof}

Another way of viewing this argument is that a worker-task assignment function with switching cost less than $w$ gives rise to a proper $(w!)$-coloring of a certain graph, with vertex set $\binom{[t]}{w}$ and edges of the form $(\{\tau_1,\ldots,\tau_w\}, \{\tau_2, \ldots, \tau_{w + 1}\})$ for $\tau_1<\tau_2<\cdots<\tau_{w+1}$. Such graphs are studied under the name of \defn{shift-graphs}, see, e.g., \cite[Section~3.4]{felsner1992interval}, where the definition and proofs of basic properties are attributed to~\cite{erdos1968chromatic}. In particular, the chromatic number of shift-graphs is known to be $(1+o(1))\cdot \log^{(w-1)} t$ (with the superscript denoting iteration).
This gives an alternative way to complete the proof of Theorem~\ref{thm:LB} and it gives the same asymptotic bound on $t$ in terms of $w$. While the chromatic number lower bound suffices to prove the switching cost bound, the nearly matching upper bound (on chromatic number) suggests that an entirely different technique would be needed in order to asymptotically improve the switching cost bound.

\section{Densification into Hamming space}\label{sec:metric}

In this section, we apply our results on worker-task assignment to the
problem of densification. In particular, we show how to embed sparse
high-dimensional binary vectors into dense low-dimensional Hamming
space.

Let $\mathcal{H}_n^k$ denote the set of $n$-dimensional binary vectors
with $k$ ones.  Let $\mathcal{H}_k(n)$ denote the set of
$k$-dimensional vectors with entries from $[n]$. We show that
$\mathcal{H}_n^k$ can be embedded into $\mathcal{H}_k(n)$ with
distortion $O(\log n \log k)$.

\thmembedding*
\begin{proof}
  Using Theorem \ref{thm:UB}, let $\psi$ be a worker-task assignment
  function mapping workers $1, 2, \ldots, k$ to a task set
  $T \subseteq [n]$ with switching-cost $O(\log n \log k)$.

  For $\vec{x} \in \mathcal{H}_n^k$, define
  $T(\vec{x}) = \{i \mid \vec{x}_i = 1\}$ to be the task set
  consisting of the positions in $\vec{x}$ that are $1$.  Define
  $\phi(\vec{x})$ to be the $k$-dimensional vector whose $i$-th
  coordinate denotes the task $t \in T(\vec{x})$ to which
  $\psi(T(\vec{x}))$ assigns worker $i$. For example, if $k = 3$,
  $\vec{x} = \langle 0, 1, 0, 1, 1, 0 \rangle$, and $\psi(T(\vec{x}))$
  assigns workers $1, 2, 3$ to tasks $4, 2, 5$, respectively, then
  $\phi(\vec{x}) = \langle 4, 2, 5 \rangle$.

  Since the coordinates of $\phi(\vec{x})$ are a permutation of the
  positions $T(\vec{x})$ in which $\vec{x}$ is non-zero, it is necessarily the
  case that
  \[\ham(\phi(\vec{x}), \phi(\vec{y})) \ge |T(\vec{x}) \setminus T(\vec{y})| \ge \ham(\vec{x}, \vec{y}) /
  2.\] On the other hand, since $\psi$ has switching cost
  $O(\log n \log k)$, it is also the case that $\psi(\vec{x})$ and $\psi(\vec{y})$
  differ by at most $O(\log n \log k) \ham(\vec{x}, \vec{y})$ assignments, meaning
  that,
  \[\ham(\phi(\vec{x}), \phi(\vec{y})) \le O(\log n \log k) \ham(\vec{x}, \vec{y}).\]
  This completes the proof of the theorem.
\end{proof}

  We remark that Theorem \ref{thm:embedding} can be generalized to
  allow for the the domain space $\mathcal{H}_n^k$ to have non-binary
  entries. In particular, if $\mathcal{L}_n^k$ is the set of vectors
  with non-negative integer entries that sum to $k$, then there is an
  embedding $\phi: \mathcal{L}_n^k \rightarrow \mathcal{H}_k(n)$ such
  that, for $x, y \in \mathcal{L}_n^k$,
   \[\ell_1(\vec{x}, \vec{y}) / 2 \le \ham(\phi(\vec{x}), \phi(\vec{y})) \le O(\log n \log k) \ell_1(\vec{x}, \vec{y}).\]
  This follows from the same argument as Theorem \ref{thm:embedding},
  except that now $T(x)$ is the \emph{multiset} for which each element
  $i \in [n]$ has multiplicity $\vec{x}_i$, and now $\psi$ is the
  worker-task assignment mapping workers $1, 2, \dots, k$ to a task
  \emph{multiset} $T \subseteq [n]$.

\section{Open problems}

We leave open the question of closing the gap between upper and lower bounds for the worker-task assignment problem: the upper bound is $\polylog(wt)$ and the lower bound is $\log^*(t)$.
 
One interesting parameter regime is when $w$ and $t$ are comparable in size (say within a polynomial factor of each other). In this regime, no super-constant lower bound is known. 
 
Another interesting direction is the problem of densification into Hamming space. Our upper bound for the worker-task assignment problem implies an upper bound for this problem, but our lower bound does not carry over. We leave open the problem of whether there is a better upper bound or a super-constant lower bound for this problem.
 
\bibliographystyle{plainurl}
\bibliography{writeup}

\end{document}